\renewcommand\footnotetextcopyrightpermission[1]{}
\begin{document}

\title{Tracing Errors, Constructing Fixes: Repository-Level Memory Error Repair via Typestate-Guided Context Retrieval}

\author{Xiao Cheng}
\email{jumormt@gmail.com}
\orcid{0000-0001-5456-3827}
\affiliation{%
  \institution{University of New South Wales}
  \city{Sydney}
  \state{NSW}
  \country{Australia}
  \postcode{2052}
}  
\authornote{Xiao Cheng and Zhihao Guo contributed equally to this work.} 

\author{Zhihao Guo}
\email{guodududu@gmail.com}
\orcid{0009-0000-4130-1270}
\affiliation{%
  \institution{University of Technology Sydney}
  \city{Sydney}
  \state{NSW}
  \country{Australia}
  \postcode{2000}
}  
\authornotemark[1]

\author{Huan Huo}
\email{huan.huo@uts.edu.au}
\orcid{0000-0003-2440-714X}
\affiliation{%
  \institution{University of Technology Sydney}
  \city{Sydney}
  \state{NSW}
  \country{Australia}
  \postcode{2000}
}  

\author{Yulei Sui}
\email{y.sui@unsw.edu.au}
\orcid{0000-0002-9510-6574}
\affiliation{%
  \institution{University of New South Wales}
  \city{Sydney}
  \state{NSW}
  \country{Australia}
  \postcode{2052}
}

\renewcommand{\shortauthors}{Xiao Cheng}

\begin{abstract}
Memory-related errors in C programming continue to pose significant challenges in software development, primarily due to the complexities of manual memory management inherent in the language. These errors frequently serve as vectors for severe vulnerabilities, while their repair requires extensive knowledge of program logic and C's memory model. Automated Program Repair (APR) has emerged as a critical research area to address these challenges. Traditional APR approaches rely on expert-designed strategies and predefined templates, which are labor-intensive and constrained by the effectiveness of manual specifications. Deep learning techniques offer a promising alternative by automatically extracting repair patterns, but they require substantial training datasets and often lack interpretability.

This paper introduces \thistool, a novel approach that harnesses the potential of Large Language Models (LLMs) for automated memory error repair, especially for complex repository-level errors that span multiple functions and files. We address two fundamental challenges in LLM-based memory error repair: a limited understanding of interprocedural memory management patterns and context window limitations for repository-wide analysis. Our approach utilizes a finite typestate automaton to guide the tracking of error-propagation paths and context trace, capturing both spatial (memory states) and temporal (execution history) dimensions of error behavior. This typestate-guided context retrieval strategy provides the LLM with concise yet semantically rich information relevant to erroneous memory management, effectively addressing the token limitation of LLMs. 
Our framework has successfully repaired \fixmemerr out of \totalmemerr real-world memory errors
derived from \pronum open-source projects that collectively comprise over a million lines of code. Compared to state-of-the-art memory error APR tools, \SAVER and \ProveNFix, our approach correctly fixes \compares and \comparep more errors, respectively. 
Moreover, \thistool outperforms current open-source state-of-the-art LLM-based SWE-agent 1.0 by repairing 94\% more errors while consuming 17M (41$\times$) less tokens.
We have also successfully repaired three critical zero-day memory errors, with fixes that have been accepted and implemented by the original developers. These results highlight a promising paradigm for repository-level program repair through program analysis-guided, retrieval-augmented LLMs, combining formal verification strengths with neural model adaptability.

\end{abstract}


\begin{CCSXML}
<ccs2012>
   <concept>
       <concept_id>10003752.10010124.10010138.10010143</concept_id>
       <concept_desc>Theory of computation~Program analysis</concept_desc>
       <concept_significance>500</concept_significance>
       </concept>
   <concept>
       <concept_id>10010147.10010178</concept_id>
       <concept_desc>Computing methodologies~Artificial intelligence</concept_desc>
       <concept_significance>500</concept_significance>
       </concept>
 </ccs2012>
\end{CCSXML}

\ccsdesc[500]{Theory of computation~Program analysis}
\ccsdesc[500]{Computing methodologies~Artificial intelligence}


\maketitle

\section{Introduction\label{sec:introduction}}

Memory-related errors in C programming constitute a persistent and formidable challenge in software development~\cite{Memfix, yan2018uaf, xie2005memleak, sui2012memleak, caballero2012undangle}. These errors frequently serve as vectors for zero-day attacks, resulting in severe consequences such as data corruption~\cite{wang2019locating, xu2015collision}, denial-of-service incidents~\cite{chen2011linux, borisov2007denial}, and information leakage~\cite{song2020information, lee2015preventing}. The intrinsic complexity of memory management in C, coupled with the language's low-level semantics, renders the detection and repair of these errors both labor-intensive and error-prone. Consequently, automated program repair (APR), which mitigates the need for exhaustive manual error analysis and repair, has emerged as a critical research domain in software engineering over the past decade~\cite{history_apr, Gazzola2019Survey, Monperrus2018intro_APR, APR_survey}.

Conventional APR approaches~\cite{SAVER, Footpatch, LeakFix, Memfix, semfix, mechtaev2016angelix, directfix, ProveNFix} rely on predefined repair templates and expert-crafted heuristics. The manual creation of repair rules is labor-intensive, and the effectiveness of repairs is constrained by the precision and completeness of human-specified rules. Moreover, even when memory errors can be successfully replayed and their root causes identified, the repair process can be still challenging without a comprehensive understanding of repository-level context and program semantics~\cite{SAVER}. For instance, addressing memory leaks may require meticulous handling of complex data structure deallocation, while resolving use-after-free errors demands careful consideration of pointer aliasing relationships and strategic control flow restructuring (see examples in Figures~\ref{fig:motivating} and ~\ref{fig:case-study}). These complexities necessitate not only an understanding of program-specific memory management paradigms but also a comprehensive grasp of broader program semantics to ensure overall program correctness rather than merely eliminating the immediate error manifestation.

Recent advancements in deep learning (DL) have emerged as a promising direction to address these limitations through their capability to learn complex program semantics and repair patterns from large-scale codebases~\cite{NMT_CoCoNut20, NMT_Edits20, NMT_knod23, NMT_Recoder21, NMT_SequenceR19, NMT_tare23, NMT_Tufano19, OLLM_Dear22, Li2020DLFix}. In contrast to rule-based approaches, DL methods can automatically extract and generalize fix patterns across diverse contexts, eliminating the need for manually crafted repair templates. These approaches excel at capturing intricate relationships between code structure, program semantics, and memory management patterns, potentially enabling more sophisticated and context-aware repairs.
However, DL-based approaches require substantial amounts of memory error repair data for training, which is particularly challenging to obtain given the relative scarcity and complexity of these errors in real-world codebases. Additionally, the inherent opacity of deep learning models---their "black box" nature---obscures the reasoning behind generated fixes, potentially undermining developer confidence in their effectiveness and making it difficult to validate the correctness of proposed repairs.

Large Language Models (LLMs) have emerged as a compelling alternative to both traditional rule-based and DL-based approaches for automated memory error repair. Unlike conventional methods constrained by predefined templates and expert-crafted rules, LLMs leverage their extensive training on vast code repositories and natural language corpora to comprehend complex program semantics. Furthermore, in contrast to DL approaches that demand substantial memory-error-specific training data, LLMs can utilize their generalized understanding of comprehensive codebases to address repair tasks with limited domain-specific examples. However, despite these inherent advantages, the potential of LLMs for memory error repair remains largely unexplored. While existing LLM-based approaches~\cite{CLLM_chatrepair, CLLM_CLPR, CLLM_CodeRover, CLLM_inferfix, CLLM_SRepair, OLLM_2024_Xia, wei2023LLMAPR} have demonstrated success in general bug-fixing scenarios, they fall short in addressing the unique challenges of memory error repair.

Memory errors frequently manifest as interprocedural phenomena requiring comprehensive semantic understanding across entire repositories---a complexity that exceeds the capabilities of current approaches primarily focused on localized fixes within isolated functions or files.
To illustrate this challenge empirically, consider a use-after-free error (detailed in \S\ref{sec:motivating}). This error manifests across six distinct functions, with a correct repair requiring synchronized modifications at three separate locations. In our experimental evaluation, we found that LLMs, even when provided with both the error-triggering function and its complete calling chain alongside a precise error root cause analysis, consistently failed to generate semantically correct patches. This limitation stems fundamentally from the inherent complexity of repository-level memory error repair, which demands a comprehensive understanding of long-range memory management contexts and interprocedural dependencies. These challenges can be characterized as follows:

\textbf{Challenge\#1: Limited understanding of interprocedural memory management patterns.}
Memory errors in production systems frequently manifest through sophisticated management patterns that span multiple functions, involve intricate data structures, and exhibit complex pointer aliasing relationships. While LLMs have demonstrated remarkable capability in comprehending intraprocedural program structures~\cite{li2024enhancing}, their ability to grasp the nuanced semantics of memory errors---particularly those with complex interprocedural triggering patterns---remains limited.

\textbf{Challenge\#2: Context window limitations for repository-wide analysis.}
Memory errors often span multiple functions and execution contexts across a repository, with both error-triggering logic and corresponding repair patches potentially distributed throughout various functions in large-scale production codebases. This presents a fundamental challenge for LLMs, which are constrained by token limitations and exhibit performance degradation with extensive prompts---a phenomenon known as ``lost in the middle''~\cite{liu2024lost}. These constraints significantly impair LLMs' ability to process and reason about the comprehensive contextual information necessary for effective memory error analysis and repair.

Drawing inspiration from established developer practices for memory error resolution, we propose an approach that methodically emulates this systematic debugging process. When addressing memory errors, developers typically follow three essential steps: 1) reproducing the error through test cases and specialized program analysis tools such as ASan~\cite{ASan}; 2) deploying debugging utilities like GDB~\cite{GDB} to establish strategic breakpoints and comprehensively analyze program dependencies and execution context; and 3) applying domain-specific knowledge of safety specifications to implement semantically robust fixes.
Based on these observations, we present \thistool, a novel LLM-based approach for automatic memory error repair in C programs that leverages context-aware retrieval augmented by typestate analysis~\cite{strom1986typestate}. Our architecture positions the LLM as a reasoning engine that operates synergistically with established program analysis techniques and debugging infrastructure to facilitate comprehensive error comprehension and correction. This integration effectively bridges the gap between LLMs' linguistic capabilities and the specialized contextual understanding required for effective memory error repair. 

To address \textbf{Challenge\#1}, we formalize memory error semantics through tracking the error-propagation path and monitoring program context transitions. The propagation path encodes the temporal execution history leading to error manifestation, while the context at each critical program point encapsulates both memory management states and detailed backtrace information of interprocedural calling chains. This comprehensive representation enables LLMs to model complex state evolution patterns across function boundaries by preserving both spatial (memory states) and temporal (execution history) dimensions of error behavior.
In response to \textbf{Challenge\#2}, we introduce an efficient context collection strategy governed by a finite typestate automaton (FTA)~\cite{fink2006typestate, strom1986typestate, xiao2024fgs, das2002esp, bodden2010typestate}. Rather than exhaustively capturing contexts at every execution point---which would overwhelm LLMs' context windows---the FTA guides the collection process by monitoring object lifetime phases from allocation to error manifestation, triggering context snapshots exclusively at semantically significant state transitions. This selective approach generates a concise yet semantically rich context trace that preserves essential memory management semantics while enabling LLM-based repair to scale effectively to repository-wide analysis.

\begin{figure*}[t]
    \centering
        \includegraphics[width=1.0\textwidth]{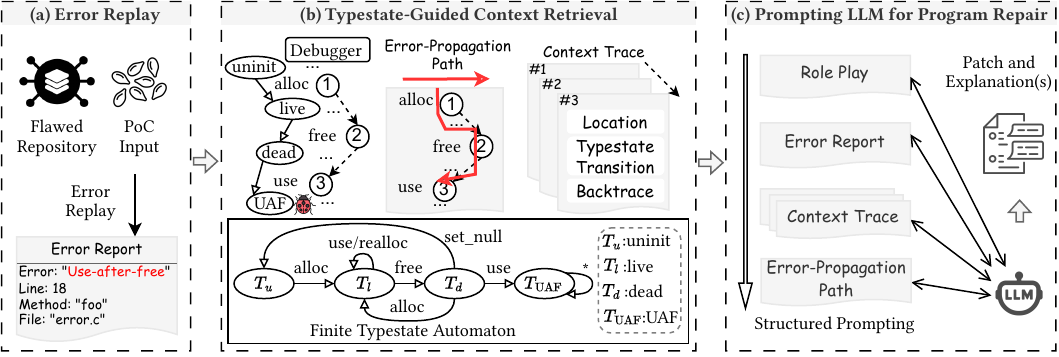}
        \vspace{-8mm}
        \caption{An overview of our framework.}
         \vspace{-8mm}
    \label{fig:framework}
\end{figure*}

Figure~\ref{fig:framework} provides an overview of our framework consisting of three phases:

\textbf{(a) Error Replay.} The input to this framework is a flawed code repository and its proof of concept (PoC) input, which then undergo a preliminary analysis that uses a dynamic analysis tool to replay the memory error and pinpoint the error-triggering location and the specific error type.

\textbf{(b) Typestate-Guided Context Retrieval.} 
Guided by a finite typestate automaton (FTA), we use a debugger to execute the program step-by-step, extracting the error-propagation path from the nearest related memory allocation to the error-triggering point. We then construct a context trace of the memory error, starting from the memory allocation and following each memory operation across the program. The context tracing is guided by the FTA, allowing for efficient tracking of program contexts only at typestate-changing breakpoints. Each context includes three elements: the location of the current point, the typestate transition, and the backtrace of the calling stack gathered at the current breakpoint.

\textbf{(c) Prompting LLM for Program Repair.} Finally, we design a multi-step structured prompting method that incrementally deliver 
role and task description~\cite{shanahan2023role}, error report, context trace and error-propagation path to the LLM for generating an appropriate patch and explanation(s).  


Our major contributions are as follows:
\begin{itemize}[noitemsep, topsep=1pt, partopsep=1pt, listparindent=\parindent, leftmargin=*]
    \item We introduce \thistool, the first repository-level memory error repair system that synergizes Large Language Models (LLMs) with context-aware semantic retrieval and typestate-guided program analysis. Our methodology advances conventional APR by enabling LLMs to systematically derive correct patches through establishing memory error semantics via interaction with runtime debuggers and formal typestate verification.
    \item 
    We propose a novel typestate-guided context retrieval methodology that precisely captures and synthesizes critical memory error contextual information during typestate transitions. This targeted approach enables LLMs to comprehend complex memory error semantics while maintaining focus on concise, semantically relevant code segments, thereby significantly improving repair accuracy and efficiency.
    \item We have established a real-world memory error database encompassing \totalmemerr errors, their proofs of concept (PoCs), and their respective fixes across \pronum projects, collectively containing over a million lines of code. Our framework successfully repairs \fixmemerr memory errors, surpassing the performance of current automated memory error repair tools. We also successfully addressed three zero-day memory errors, with our solutions accepted and implemented by the original developers. Full details will be disclosed upon paper acceptance.
\end{itemize}



\section{Preliminaries and Problem Formulation\label{sec:background}}

This section establishes the theoretical foundation for memory error detection using typestate analysis and examines the capabilities of Large Language Models (LLMs) in code comprehension. We then formulate the automated program repair problem for memory errors within the context of LLM-based approaches.

\subsection{Memory Errors and Typestate Analysis}


Memory errors in C programming often stem from improper memory management, necessitating careful tracking of memory's temporal properties along the program's control flow. Typestate analysis~\cite{das2002esp, strom1986typestate, fink2006typestate, bodden2010typestate} emerges as an effective method for detecting and understanding these errors, as it monitors execution logic by tracking the temporal state changes of memory objects. This approach represents different states of a given memory object and their transitions using a finite typestate automaton (Definition~\ref{def:tfsa}), allowing for precise modeling of memory object lifecycles. By capturing the various states a memory object can occupy (such as allocated, initialized, or freed) and tracking transitions between these states during program execution, typestate analysis can identify violations of expected state sequences---often indicative of memory errors.


\begin{definition}[Finite Typestate Automaton]
\label{def:tfsa}
A finite typestate automaton (FTA) for an error ET is a quintuple denoted as \(\mathcal{A}_{\textup{ET}}=\left<\Sigma ,\mathbb{T},T_u,\delta,T_\textup{ET} \right>\). 
The language \(\Sigma\) signifies the operations (e.g., function calls) that can be performed on the typestates. \(\mathbb{T}\) encompasses all the possible typestates, with \(T_u \in \mathbb{T}\) representing the initial state. \(\delta: (\mathbb{T}\times\Sigma)\to\mathbb{T}\) is the state-transition table encoding the effects of operations in \(\Sigma\). \(T_\textup{ET}\) is the error typestate indicating a potential error detected. For a program statement \(s\), we use \(\mathtt{op}(s)\) to retrieve its corresponding operation in \(\Sigma\).
\end{definition}

\begin{table}
    \footnotesize
    \renewcommand{\arraystretch}{1.2} 
    \caption{Finite typestate automata of use-after-frees (UAF), double-frees (DF) and memory leaks (ML).}
    
    \vspace{-3mm}
    \begin{adjustbox}{width=1\textwidth, center}
    \begin{tabular}{|l|l|l|}
    \hline
        $\mathcal{A}_{\text{UAF}} = \left<\Sigma, \mathbb{T}, T_u, \delta, T_\textup{UAF}\right>$ & 
        $\mathcal{A}_{\text{DF}} = \left<\Sigma, \mathbb{T}, T_u, \delta, T_\textup{DF}\right>$ & 
        $\mathcal{A}_{\text{ML}} = \left<\Sigma, \mathbb{T}, T_u, \delta, T_\textup{ML}\right>$ \\ 
    \hline
        $\mathbb{T} = \{T_u, T_l, T_d, T_\textup{UAF}\}$ & 
        $\mathbb{T} = \{T_u, T_l, T_d, T_\textup{DF}\}$ & 
        $\mathbb{T} = \{T_u, T_l, T_d, T_\textup{ML}\}$ \\ 
        $\Sigma = \{\text{alloc}, \text{free}, \text{use}, \text{realloc}, \text{set\_null}\}$ & 
        $\Sigma = \{\text{alloc}, \text{free}, \text{realloc}, \text{set\_null}\}$ & 
        $\Sigma = \{\text{alloc}, \text{free}, \text{realloc}, \text{exit}\}$ \\ 

    \hline
        \includegraphics[scale=0.78]{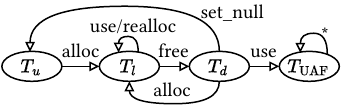} &
        \includegraphics[scale=0.78]{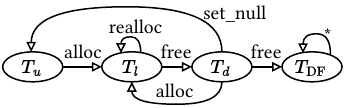} &
        \includegraphics[scale=0.78]{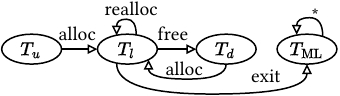} \\ \hline
$T_\textup{UAF}$: Use-after-free error state.& $T_\textup{DF}$: Double-free error state.&$T_\textup{ML}$: Memory leak error state.
        \\\hline
    \multicolumn{3}{|l|}{$T_u$: Uninitialized state (memory object is not yet allocated); $T_l$: Live state (memory object is allocated and in use); }
    \\
    \multicolumn{3}{|l|}{
    $T_d$: Dead state (memory object is released).
    }\\\hline
    \multicolumn{2}{|l|}{
        use: Use a heap object; {set\_null}: Set the pointer pointing the heap object to null.
        } & exit: Return from main function
        \\\hline
    \multicolumn{3}{|l|}{
    {alloc}: Allocate a heap memory object; {realloc}: Reallocate a heap memory object; {free}: Free a heap object;  
    }
    \\\hline
    \end{tabular}
    \end{adjustbox}
    \label{tab:fsm}
\vspace{-3mm}
\end{table}

\runinhead*{Specifications for Memory Errors.} 
In this paper, we focus on three critical yet difficult-to-fix memory errors~\cite{SAVER}: use-after-frees~\cite{CWE416}, double-frees~\cite{CWE415}, and memory leaks~\cite{CWE401}. Table~\ref{tab:fsm} presents the specifications of these errors in the form of finite typestate automata. Each automaton is represented as a graph where each node corresponds to a specific typestate, and the edges between these nodes are annotated with transition operations, illustrating the transition relationships between different states.
The analysis process begins with an uninitialized typestate, denoted as \(T_u\), and then may advance through different typestate transitions depending on the program statements encountered. For example, \(T_u\) transitions to \(T_l\) upon encountering a memory allocation statement, such as the primitive heap allocation API \texttt{malloc}. If released memory (\(T_d\)) is used or freed again, it transitions to \(T_\textup{UAF}\) and \(T_\textup{DF}\), representing a use-after-free or double-free error as per \(\mathcal{A}_{\text{UAF}}\) and \(\mathcal{A}_{\text{DF}}\) respectively. Similarly, a live heap memory object (\(T_l\)) transitioning to \(T_\textup{ML}\) at program exit as per \(\mathcal{A}_{\text{ML}}\) indicates a memory leak.

\subsection{Large Language Models for Code Comprehension}


Large Language Models (LLMs) demonstrate exceptional proficiency in code comprehension and manipulation, establishing themselves as powerful tools for software engineering tasks. Trained on extensive corpora encompassing both natural language and source code, these models exhibit sophisticated capabilities in parsing and interpreting complex program structures.
Recent research has shown that LLMs excel in several domains critical for program analysis: they effectively interface with external APIs~\cite{wang2024llmapi}, resolve indirect call relationships~\cite{cheng2024semanticenhancedindirectanalysislarge}, and conduct nuanced data flow analyses~\cite{wang-etal-2024-sanitizing,wang2025llmdfa}. Their capacity extends to recognizing intricate program dependencies, evaluating path conditions, and inferring control flow constructs~\cite{huang2024revealing}, which enables robust reasoning about diverse execution paths, including those within iterative structures~\cite{li2024enhancing}. Moreover, these models demonstrate significant utility in understanding and synthesizing formal verification components, such as loop invariants~\cite{pei2023can} and contractual specifications including preconditions and postconditions~\cite{wen2024enchanting}. These advanced code comprehension capabilities of LLMs present compelling opportunities for their application in automated memory error repair---the primary focus of our research.

\subsection{Problem Formulation}

Our objective is to automatically generate a patch to repair a flawed C repository with memory errors by leveraging the power of LLM and the prompts constructed based on FTA specifications (Table~\ref{tab:fsm}). The patch should fix the memory error and not introduce new bugs. 
Formally, let \( P \) be the original flawed C repository (the error type is ET) and \( I \) be a specific proof of concept (PoC) input. We generate a set of prompts \( \llbracket Q \rrbracket \) via the function \textbf{M}:
\[
\llbracket Q \rrbracket = \text{\textbf{M}}(P, \mathcal{A}_\textup{ET}, I)
\]

The LLM's role is to take \( \llbracket Q \rrbracket \) to generate a correct patch \( \Delta P \):
\[
\Delta P = \text{LLM}(\llbracket Q \rrbracket)
\]

The patch \( \Delta P \), when applied to \( P \), should yield a updated repository \( P'=P+\Delta P \) that fixes the error without introducing new bugs. 

To ensure the patch's effectiveness and reliability, we impose these constraints:

(1) The patched repository \( P' \) should not exhibit the memory error when executed with \( I \):
\[
f(P', I) \neq \text{memory error}
\]

where \( f(P, I) \) denotes the execution of \( P \) with input \( I \).

(2) Let \( \mathcal{I} \) represent the available test-suite for the repository. The patched repository \( P'\) should not introduce new bugs, meaning that for any test case in the provided test-suite \( I' \in \mathcal{I} \), the execution \( f(P', I') \) should produce correct results:
\[
\forall I' \in \mathcal{I}, \ f(P', I') = \text{expected result}
\]


We formulate our memory error APR problem as follows:
\begin{formulateBox}
Given a flawed C repository \( P \) and a PoC input \( I \), we design a method \( \text{\textbf{M}} \) to construct prompts \( \llbracket Q \rrbracket \) that guide the LLM to generate a patch \( \Delta P \). The goal is to ensure that the patch \( \Delta P \), when applied to \( P \), fixes the memory error for \( I \) and maintains correctness for its test-suite \( \mathcal{I} \).  
\end{formulateBox}

\section{A Motivating Example\label{sec:motivating}}

Figure~\ref{fig:motivating} illustrates the pipeline of \thistool, walking through the three phases depicted in Figure~\ref{fig:framework}. These phases are demonstrated using a use-after-free error.
A heap memory object is initially allocated by the \code{create\_context} function, which wraps a \code{malloc} call at Line 27 in the \texttt{test.c} file (\textcircled{1}). 
This memory is freed by invoking the \code{release\_context} method, which calls a \code{free} function at Line~37 in \texttt{test.c} (\textcircled{2}). 
However, the released memory is erroneously used in the method \code{clone\_data} at Line~21 in \texttt{test.c} (\textcircled{3}).
This use-after-free error traverses six functions from allocation (\textcircled{1}) to the error-triggering point (\textcircled{3}).

\textbf{Challenges.} The successful repair of this memory error poses four fundamental challenges.
First, it demands a thorough understanding of the allocated memory structure (\code{Context}), necessitating proper deallocation and null pointer validation mechanisms for both the base object (\code{Context}) and its associated fields (\code{data}).
Second, the repair requires careful restructuring of operation sequences to maintain program correctness, particularly ensuring that \code{release\_context} executes after its dependent operation \code{copy\_ctx}, thus preventing critical logic in \code{copy\_ctx} from being invalidated by premature null pointer checks.
Third, the fix requires semantic comprehension of the codebase to implement a deep copy operation using \code{memcpy} for \code{src}, rather than a potentially unsafe shallow copy of \code{src->data}.
Finally, the repair must correctly handle interprocedural interactions along the error-propagation path, ensuring consistent memory management across function boundaries.

\textbf{Existing Efforts.} Notably, state-of-the-art memory error APR tools, including \ProveNFix~\cite{ProveNFix} and \SAVER~\cite{SAVER}, fail to repair this memory error even when being provided with \emph{the precise error location and configured with their most sophisticated options}, such as flow-sensitive analysis and header file parsing. They fall short in understanding the hierarchical structure of the \code{Context} object, consequently implementing only superficial null checks on \code{src} rather than comprehensively addressing the whole structure. Their repair strategies further exhibit semantic miscomprehension by introducing premature returns after the \code{release\_context} call which, while eliminating the immediate error, prevents the execution of subsequent critical operations and compromises program integrity. The LLM-based approach SWE-agent 1.0~\cite{yang2024sweagent} similarly fails to generate an appropriate patch due to its inability to identify and reason about the interprocedural execution logic of this error. Furthermore, our attempts to generate a patch using only the error report and the error-triggering function \code{clone\_data}, or even the complete calling chain or \texttt{test.c} file, were unsuccessful.

\textbf{\thistool Approach.} In Phase (a), the use-after-free error is replayed and confirmed. Phase (b) identifies the relevant memory operations according to the finite typestate automaton \(\mathcal{A}_\text{UAF}\) in Table~\ref{tab:fsm}, extracting the error-propagation path and relevant program contexts. Phase (c) feeds the error report, context trace, and error-propagation path from the previous phases into the LLM, which infers the correct patch and provides clear explanations for its decisions, as shown at the bottom of Figure~\ref{fig:motivating}. 
The generated patch demonstrates sophisticated interprocedural context and semantics awareness, implementing coordinated modifications across three distinct code segments spanning two different functions. This solution aligns precisely with the ground truth that eliminating this error while preserving logical correctness of the project. 

\begin{figure}[t]
    \centering
        \includegraphics[width=1.0\linewidth]{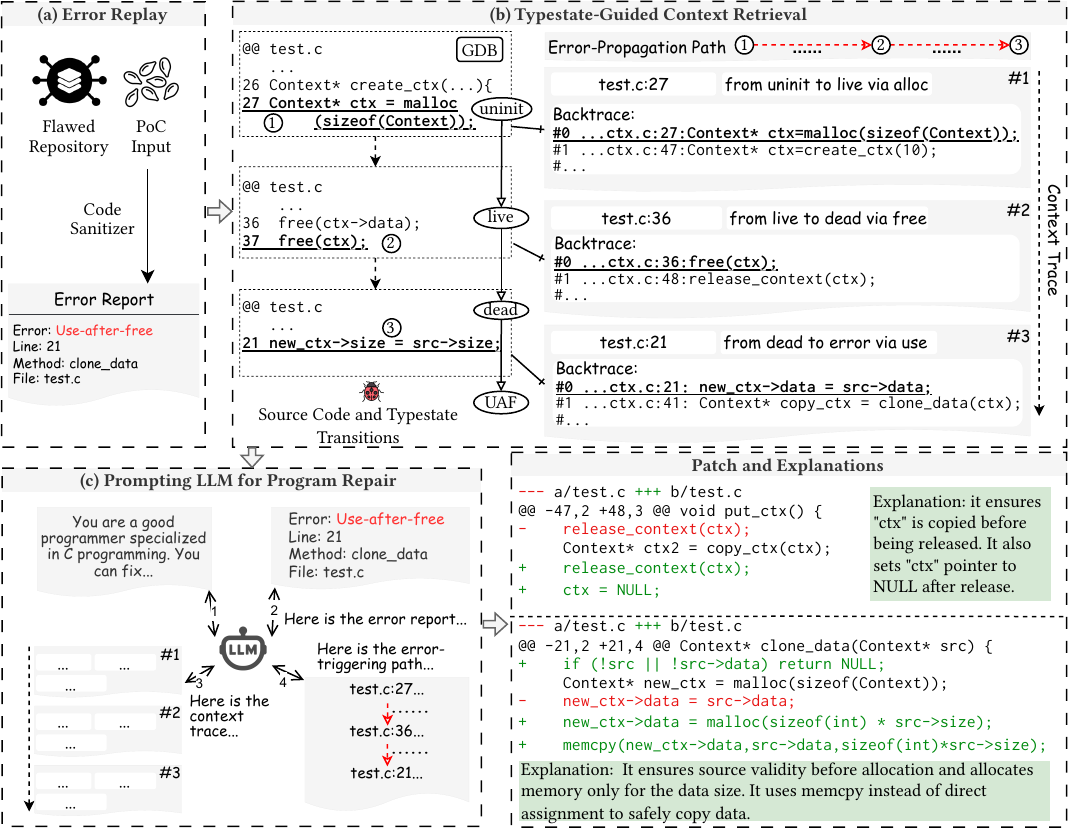}
        \vspace{-5mm}
        \caption{A motivating example illustrating how \thistool repairs a use-after-free error.}
        \vspace{-5mm}
        \label{fig:motivating}
\end{figure}



\textbf{(a) Error Replay.}
Understanding the root cause of a memory error often requires replaying the error. As illustrated in Figure~\ref{fig:motivating}(a), we reproduce this error using its proof of concept (PoC) input, which is a specific assembly file. We employ Valgrind~\cite{Nethercote2007Valgrind}, a widely-used dynamic analysis tool, to capture the core dump of the PoC input at the moment the error is triggered---specifically, at Line 21 in the \texttt{test.c} file. This core dump provides a snapshot of the memory state at the time of the crash. Based on the core dump, we generate an \emph{error report} that offers the specific error type and error location to the LLM.

\textbf{(b) Typestate-Guided Context Retrieval.}
We employ a finite typestate automaton (\(\mathcal{A}_{\text{UAF}}\) in Table~\ref{tab:fsm}) to facilitate the extraction of the \textit{error-propagation path} using the GDB debugger~\cite{GDB}. This path begins with the memory allocation at \textcircled{1}, proceeds through the memory free at \textcircled{2}, and ultimately stops at the error-triggering point at \textcircled{3}, as indicated by an error typestate. 
Along the path, we use \(\mathcal{A}_{\text{UAF}}\) to guide typestate transitions of the erroneously accessed memory object obtained from the previous step. This allows us to monitor the typestate changes and the corresponding contexts at each typestate change breakpoint with the GDB debugger. 
This rigorous tracing enables the LLM to comprehend the structure of the allocated memory object (e.g., \code{Context}), the interprocedural evolution of the error across the code repository, and the broader program semantics (logical correctness) along the execution trace.

The key transitions in this model are triggered by the \code{malloc}, \code{free}, and \code{use} statements, which occur at \textcircled{1}, \textcircled{2}, and \textcircled{3}, respectively. 
Upon invoking the \code{malloc} function, denoted as the \texttt{alloc} operation in our FTA, the typestate of the memory object shifts to \texttt{live}, indicating that the memory is currently in use. Subsequently, when the \code{free} API is called, the typestate changes to \texttt{dead}, signifying that the memory has been released. However, if the same memory object is used after it is released, the typestate changes to \texttt{error}, indicating a use-after-free error.
At each typestate change point, we extract its associated context, which includes the typestate transition, the location, and the backtrace. The program contexts at the three typestate change points collectively form a \emph{context trace}. For instance, in the final error context, the typestate transition indicates a shift from \texttt{dead} to \texttt{error} due to a \texttt{use} operation. The location, \texttt{test.c:21}, specifies the file path and line number. The backtrace, gathered from the debugger at this breakpoint, provides the call stack details, illustrating how the \texttt{use} operation was invoked by \code{clone\_data} and other higher-level callers in the codebase.



\textbf{(c) Prompting LLM for Program Repair.}
In this phase, depicted in Figure~\ref{fig:motivating}(c), we employ structured prompting~\cite{hao2022structured} to break down the prompts into four structured steps: role and task description~\cite{shanahan2023role}, error report, context trace, and error-propagation path. These segments of information are systematically fed to the LLM step by step. This method effectively deconstructs the task of program repair into increasingly detailed and specific stages, allowing the LLM to progressively comprehend and tackle the error.
\section{\thistool Approach\label{sec:approach}}
In this section, we detail our \thistool approach. We first identify the specifics of the memory error (\S\ref{sec:errconfirm}), then utilize typestate-guided context retrieval to understand its semantics (\S\ref{sec:typestatetracing}). Finally, we leverage this information to generate a patch and explanation(s) via LLM (\S\ref{sec:llmapr}).

\subsection{Error Replay\label{sec:errconfirm}}
\label{subsec:errconfirm}

In this paper, our emphasis is not on detecting memory errors but on repairing validated true errors. The first step of our approach involves reproducing the error and identifying the associated bug report, which includes the type of memory error and its location. To achieve this, we utilize a dynamic analysis tool (DAT) such as ASan~\cite{ASan} or Valgrind~\cite{Nethercote2007Valgrind} during compilation and generate debuggable files following the DWARF 5~\cite{DWARF5} standard. We then use specific proofs of concept (PoCs) to reproduce the error and employ DAT to generate detailed error reports. At the error-triggering point, the DAT outputs information about the type of memory error, the location, and the specific memory address that is erroneously accessed (i.e., the error address).

\subsection{Typestate-Guided Context Retrieval\label{sec:typestatetracing}}

This section elaborates on how we enhance LLM prompts with a deep understanding of memory error semantics while optimizing token usage. We achieve this by employing typestate finite state automata (Definition~\ref{def:tfsa}) to guide the derivation of a comprehensive yet precise error-propagation path (Definition~\ref{def:error_path}) and context trace (Definition~\ref{def:ctxtrace}) that capture the interprocedural contextual evolution linked with the memory error. This serves as a key component for patch generation in the subsequent phase.

We first demonstrate how to construct the full execution path of the error (Definition~\ref{def:fullpath}) and the typestate-changing context map (Definition~\ref{def:sinfo}) using Algorithm~\ref{ag:fep-construction}. The context map records the program context (Definition~\ref{def:ctx}) specifically at the statements introducing typestate changes. Using the full path and context map as a foundation, we then illustrate how to build the error-propagation path as explained in the rule in Figure~\ref{fig:etp-rules} and the context trace as outlined in Figure~\ref{fig:cxt-rules}. We also discuss the correctness of each algorithm and inference rule.

\begin{definition}[Full Execution Path $\pi$]
    \label{def:fullpath}
A full execution path, denoted as \(\pi=(s_i)_{i=1}^n\), is a chronologically ordered sequence of program statements, each assigned an index, such that \(s_i=\langle \mathtt{sc}(s_i), i\rangle\). Here, \(\mathtt{sc}(s_i)\) retrieves the source code associated with \(s_i\), while \(i\) represents the position of this statement in the sequence, indicating the execution order. This sequence starts from the program's entry point and extends all the way to the error-triggering statement.
\end{definition}

Note that for any \(s_i, s_j\) in \(\pi\) where \(i\neq j\), their corresponding source code can be identical, i.e., \(\mathtt{sc}(s_i)=\mathtt{sc}(s_j)\), because a single code line can be invoked or executed multiple times throughout the program's run, depending on the control flow of the program.
\LinesNumbered
\SetNoFillComment

\SetKwInput{KwInput}{Input}   
\SetKwInput{KwOutput}{Output}
\SetKwFunction{MAIN}{$\mathtt{constructPi}$}
\begin{center}
    \begin{algorithm}[t]
    \small
    \caption{Full execution path and typestate-changing context map construction.} \label{ag:fep-construction}
    \DontPrintSemicolon
    \KwInput{$\mathit{addr}_e$: Memory address erroneously accessed (error address); $P$: Target program; 
    $I$: PoC input; $\mathcal{A}_\textup{ET}=\left<\Sigma ,\mathbb{T},T_u,\delta,T_\textup{ET} \right>$: Finite typestate automaton;
    }
    \KwOutput{$\pi$: Full execution path; $\tMap$: Map from typestate-changing statements to program contexts
    }
    \SetKwProg{Fn}{Function}{:}{\KwRet}
    \Fn{\MAIN{$\mathit{addr}_e$, P, I, $\mathcal{A}_\textup{ET}$}}{
    
        GDB.execute(``file ''+ $P$);     \agc{Set the program to execute} \\
        GDB.execute(``set arg ''+ $I$);  \agc{Set the program PoC input}\\
        GDB.execute(``start'');        \\
        $\pi \gets ()$; $\tMap \gets \{\}$; $T \gets T_u$; $i\gets 1$; \\
        frame $\gets$ GDB.selected\_frame();\\
        \While{$T\neq T_\textup{ET}$\label{line:endOfE}}{
            $\lab \gets \textup{frame.code\_line}$; \\
            $s \gets \langle \lab, i \rangle$; \\
            \If(\agc{Whether the operation of $s$ belongs to the FTA language and $s$ manipulates the error address}){$\mathtt{op}(s) \in \Sigma\wedge \textup{frame.addr} = \mathit{addr}_e$\label{line:opmatch}}{
                
            \If(\agc{Typestate changes}){$(T' \gets \delta(T, \mathtt{op}(s))) \neq T$\label{line:statechange}}{
                \text{backtrace} $\gets$ GDB.execute("backtrace");\\
                 $\mathit{Ctx}_s \gets \{\textup{frame.location}, \mathtt{transition}(T, T', \mathtt{op}(s)) , \textup{backtrace}\}$;\\
                 $\tMap\gets \tMap\cup \{s\mapsto \mathit{Ctx}_s\}$;\agc{Record typestate-changing context in \tMap}\\
                 $T \gets T'$;\\
                 }
            }
            $\pi \gets \pi \circ s$;\agc{Append $s$ to $\pi$}\\
            $i$++;\\
            \text{GDB.execute("step")};\\
            frame $\gets$ GDB.selected\_frame(); \\
            }
        \KwRet{$\pi, \tMap$};\\
    }
    \end{algorithm}
\end{center}



\begin{definition}[Program Context $\mathit{Ctx}$]
\label{def:ctx}
    \label{def:program_context}
Given a full execution path \(\pi\), for any statement \(s_i \in \pi\), the program context of \(s_i\) is defined as:
\[
\mathit{Ctx}_{s_i} = \langle \mathit{lc}, \mathit{tr}, \mathit{cp} \rangle
\]

where:
\begin{itemize}
\item \(\mathit{lc}\) denotes the program location (file path and line number) of \(s_i\),
\item \(\mathit{tr}\) represents the typestate transition at \(s_i\), encapsulating the pre- and post-typestates and the memory operation, and
\item \(\mathit{cp}\) signifies the backtrace of the call path, showing the call sequence leading up to the current point of execution. Specifically, it records each function call along with its source location and the corresponding source code at that location.
\end{itemize}
\end{definition}

\begin{definition}[Typestate-Changing Context Map $\tMap$]\label{def:sinfo}
A typestate-changing context map $\tMap$ associates statements inducing typestate changes with their respective program contexts. Importantly, each memory object manipulated by the statements in $\tMap$'s key set must be aliased with the object at the error-triggering point. This aliasing is evidenced by their shared error address.
\end{definition}

\runinhead*{Full Execution Path and Typestate-Changing Context Map Construction.} Algorithm \ref{ag:fep-construction} outlines the construction of the full execution path (\(\pi\)) and the typestate-changing context map (\tMap) using the GNU Debugger (GDB)~\cite{GDB}. The algorithm initializes GDB and runs the program with input \(I\). Within the loop (Lines~7-19), it constructs \(\pi\) and \tMap step-by-step until an error state occurs. Each loop iteration, representing a program execution step, appends the statement to \(\pi\) (Line~16), and collects data from the current stack frame. If an operation manipulates the error address and causes a typestate change (Lines~10-11), the algorithm inserts the program context, including the current source code line, typestate transition and backtrace, into \(\tMap\) (Lines~13-14). 

\begin{figure}[t]
\centering
\includegraphics[width=\linewidth]{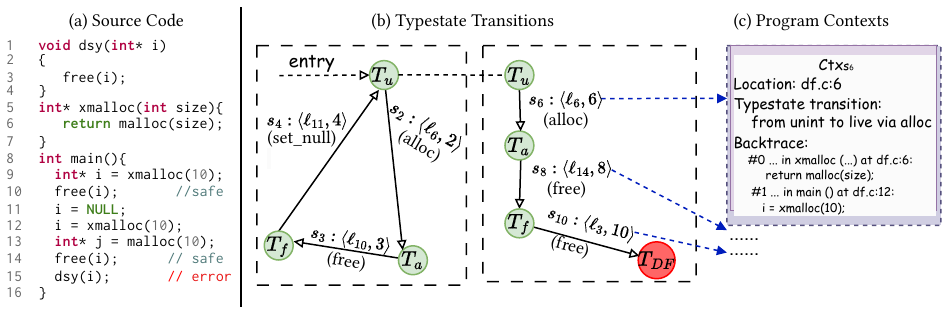}
\vspace{-8mm}
\caption{An example of typestate transitions and program contexts.}
\vspace{-5mm}
\label{fig:approach_example1}
\end{figure}

\begin{example}
\label{ex:1}
Figure~\ref{fig:approach_example1}(a) shows a double-free error at $\lab_3$ (invoked from $\lab_{15}$). 
In this example, the program execution flows through 10 steps as follows:
    \[
        \pi = (s_i)_{i=1}^{10} = \left( \langle \lab_9,1 \rangle, \langle \lab_6,2 \rangle, \langle \lab_{10},3 \rangle, \langle \lab_{11},4 \rangle, \langle \lab_{12},5 \rangle, \langle \lab_6,6 \rangle, \langle \lab_{13},7 \rangle, \langle \lab_{14},8 \rangle, \langle \lab_{15},9 \rangle, \langle \lab_3,10 \rangle \right)
    \]
    where $\lab_n$ refers to the source code at Line $n$ for this example. 
    The process of typestate transition and the corresponding results are illustrated in Figure~\ref{fig:approach_example1}(b). 
    The statements that induce typestate changes are as follows:
    \[
    s_2: \langle \lab_6, 2 \rangle, \quad s_3: \langle \lab_{10}, 3 \rangle, \quad s_4: \langle \lab_{11}, 4 \rangle, \quad s_6: \langle \lab_6, 6 \rangle, \quad s_8: \langle \lab_{14}, 8 \rangle, \quad s_{10}: \langle \lab_{3}, 10 \rangle
    \]
    $s_{7}: \langle \lab_{13}, 7 \rangle$ does not belong to these statements because it operates on a different address from $\mathit{addr}_e$.
    Accordingly, we construct the typestate context map $\tMap$ as follows:
    \[ 
    \tMap = \{s_2 \mapsto \mathit{Ctx}_{s_2}, s_3 \mapsto \mathit{Ctx}_{s_3}, s_4 \mapsto \mathit{Ctx}_{s_4}, s_6 \mapsto \mathit{Ctx}_{s_6}, s_8 \mapsto \mathit{Ctx}_{s_8}, s_{10} \mapsto \mathit{Ctx}_{s_{10}}\}
    \]
    where $\mathit{Ctx}_{s_i}$ denotes the program context at statement $s_i$. 
    The mapping $(s_n \mapsto \mathit{Ctx}_{s_n})$ represents the association between the statement $s_n$ and its corresponding program context $\mathit{Ctx}_{s_n}$, as defined in Definition~\ref{def:sinfo}.
    For instance, the program context $\mathit{Ctx}_{s_6}$, depicted in Figure~\ref{fig:approach_example1}(c), includes the statement's location, detailed typestate transition information, and backtrace data. The backtrace clearly shows the call stack from the \code{main} function to the \code{xmalloc} function as well as the exact line of source code at each stack frame.
    \label{ex:fullpath}
\end{example}

\begin{lemma}[Correctness of Algorithm~\ref{ag:fep-construction}]
\label{lemma:fep}
Algorithm~\ref{ag:fep-construction} correctly constructs $\pi$ according to Definition~\ref{def:fullpath} and \tMap according to Definition~\ref{def:sinfo}.
\end{lemma}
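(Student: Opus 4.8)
The plan is to establish the two conjuncts of the lemma separately, since Algorithm~\ref{ag:fep-construction} is essentially a single instrumented traversal of the concrete execution and the two outputs $\pi$ and $\tMap$ are accumulated along the same loop. For the $\pi$-part I would argue by a loop invariant: after the $k$-th iteration of the \textbf{while} loop (Lines~7--19), $\pi$ equals the sequence $(\langle \mathtt{sc}(s_j), j\rangle)_{j=1}^{k}$ where $s_j$ is the $j$-th concrete statement executed by $P$ on input $I$ starting from the entry point. The invariant is trivially true initially ($\pi=()$, $i=1$); for the inductive step I would observe that each iteration appends exactly one statement (Line~16), reads that statement's source line from the currently selected GDB frame (Line~8, via \texttt{frame.code\_line}), tags it with the current index $i$ (Line~9), and then advances the concrete execution by exactly one step (\texttt{GDB.execute("step")}, Line~18) before re-selecting the frame. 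Termination of the loop when $T = T_\textup{ET}$, together with the assumption that the error is genuinely triggered by the PoC $I$, guarantees that the last statement appended is the error-triggering statement, matching Definition~\ref{def:fullpath}. I would also remark on the already-observed subtlety that repeated source lines get distinct indices, which is exactly what Definition~\ref{def:fullpath} permits and what the monotone counter $i$ ensures.

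For the $\tMap$-part I would show that $\tMap$ satisfies Definition~\ref{def:sinfo}: (i) its key set is exactly the set of executed statements that induce a typestate change on the error object, and (ii) every such key's manipulated object is aliased with the error-triggering object, witnessed by a shared error address. Point~(i) follows from the guards: a statement $s$ is inserted into $\tMap$ only inside the block at Lines~10--14, which requires $\mathtt{op}(s)\in\Sigma$, $\textup{frame.addr}=\mathit{addr}_e$ (Line~10), and $\delta(T,\mathtt{op}(s))\neq T$ (Line~11); conversely any executed statement meeting these three conditions is inserted. Here I would need the side fact that the running variable $T$ always holds the current typestate of the error object --- a second loop invariant: $T$ equals the typestate reached by applying $\delta$ to $T_u$ along the subsequence of already-executed statements that operate on $\mathit{addr}_e$. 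This follows because $T$ is updated (Line~15) precisely when and only when such a transition fires, and $\delta$ is the transition function of $\mathcal{A}_\textup{ET}$ by Definition~\ref{def:tfsa}. Point~(ii) is immediate: the guard $\textup{frame.addr}=\mathit{addr}_e$ on Line~10 forces every key of $\tMap$ to manipulate the error address $\mathit{addr}_e$, and two heap accesses sharing a live/dangling address are aliases of the same object, which is exactly the aliasing witness Definition~\ref{def:sinfo} asks for. Finally I would note that the recorded context $\mathit{Ctx}_s$ assembled on Line~13 --- \{\textup{frame.location}, $\mathtt{transition}(T,T',\mathtt{op}(s))$, \textup{backtrace}\} --- matches the three components $\langle \mathit{lc},\mathit{tr},\mathit{cp}\rangle$ of Definition~\ref{def:ctx}, so the range of $\tMap$ is well-typed.

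The main obstacle I anticipate is justifying the faithfulness of the GDB-mediated observations: that \texttt{frame.code\_line} really returns $\mathtt{sc}(s_i)$, that \texttt{GDB.execute("step")} advances by exactly one source-level statement in the concrete semantics, that \texttt{frame.addr} denotes the heap address the current operation touches, and that \texttt{GDB.execute("backtrace")} reports the genuine call stack $\mathit{cp}$. These are properties of the debugging infrastructure rather than of the algorithm's control flow, so the honest move is to fold them into an explicit assumption (a ``debugger soundness'' hypothesis) stated before the lemma, and then the proof becomes a clean double induction over loop iterations. A secondary, minor subtlety is the \texttt{realloc} case, where a single operation can both free one object and allocate another: I would point out that the guard on Line~10 keys everything to the \emph{error} address $\mathit{addr}_e$, so $\tMap$ tracks only the object that ends up being erroneously accessed and the $\delta$ transition applied is the one appropriate to that object's role in the \texttt{realloc}, keeping the invariant for $T$ intact.
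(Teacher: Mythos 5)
Your proposal is correct and follows essentially the same route as the paper's own (much terser) proof sketch: both arguments hinge on the same observations---Lines~2--4 establish the entry point, the loop condition on $T=T_\textup{ET}$ gives termination at the error-triggering statement, the step-and-append structure of Lines~16--19 yields execution order for $\pi$, and the guards at Lines~10--11 are exactly the membership conditions of Definition~\ref{def:sinfo}. The only difference is one of rigor: you make explicit the two loop invariants and the ``debugger soundness'' hypothesis that the paper leaves implicit, which is a strict improvement in precision rather than a different argument.
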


\begin{proof}[Proof Sketch]
In Algorithm~\ref{ag:fep-construction}, Lines 2-4 ensure that \( s_1 \) is the program's entry point. The condition in Line 7 guarantees loop termination at the error state, which indicates an error-triggering statement (according to Definition~\ref{def:tfsa}). Lines 18-19 ensure that the loop follows the program execution order. Thus, the construction of \(\pi\) aligns with Definition~\ref{def:fullpath}. Additionally, the conditions in Lines 10 and 11 conform to Definition~\ref{def:sinfo} for constructing \tMap.
\end{proof}

\begin{definition}[Error-Propagation Path $\pi_e$]
    \label{def:error_path}
An error-propagation path \(\pi_e = (s_i)_{i=m}^n\) is a subsequence of indexed program statements that extends from the memory allocation statement $s_m$, where $0\leq m<n \wedge \mathtt{op}(s_m)=\text{alloc}$, to the error-triggering point $s_n$. 
$s_m$ is the closest memory allocation to $s_n$ on the full path $\pi$ where the allocated memory address is the error address accessed at $s_n$.
\end{definition}

\runinhead*{Typestate-Guided Error-Propagation Path Extraction.}
The inference rule depicted in Figure~\ref{fig:etp-rules} is used to extract the error-propagation path, \(\pi_e\), from the full execution path, \(\pi\) (Definition \ref{def:fullpath}). This rule helps reduce the code lines and isolate the error code from the executed code. It works by identifying a particular statement, \(s_m\), associated with an operation that changes the typestate allocation, guided by the \(\tMap\), which tracks these typestate changes (as per Definition \ref{def:program_context}). The rule ensures that any statements on the execution path \(\pi\) falling between \(s_m\) and \(s_n\) do not manipulate the error address or involve memory allocation. This isolates the statements likely to be the root cause of the error, similar to a debugging process that narrows down and focuses only on problematic code segments, effectively reducing the length of the message input into the large language model.

\begin{figure}[t]
\small
\centering
\ruledef{ETP}{
  \pi=(s_i)_{i=1}^n\  s_m \in \mathtt{KS}(\tMap) \wedge \mathtt{op}(s_m)=\text{alloc} \wedge (\forall k, m < k \leq n: s_k \notin \mathtt{KS}(\tMap) \vee \mathtt{op}(s_k) \neq \text{alloc})
}{
 \pi_e \gets(s_i)^n_{i=m}
}
\vspace{-3mm}
\caption{The inference rule for typestate-guided error-propagation path extraction. $\mathtt{KS}(\tMap)=\{s_i\mid (s_i\mapsto \mathit{Ctx_{s_i}}) \in\tMap\}$ represents the key set of $\tMap$.}
\vspace{-3mm}
\label{fig:etp-rules}
\end{figure}

\begin{example}
The error-propagation path of Example~\ref{ex:fullpath} is:
\[\pi_e = (s_i)_{i=6}^{10} = \left( \langle \lab_6,6 \rangle, \langle \lab_{13},7 \rangle, \langle \lab_{14},8 \rangle, \langle \lab_{15},9 \rangle, \langle \lab_3,10 \rangle \right)\]
The path starts from $s_6$, which is the nearest allocation operation preceding the error-triggering point $s_{10}$ on $\pi$. Although \(s_7:\langle\lab_{13},7\rangle\) lies between these two points, it is not designated as the starting point because the variable \(j\) does not point to the error address. The operation at \(s_8:\langle\lab_{14},8\rangle\) does manipulate the error address, but it does not qualify as a starting point because it does not involve a memory allocation operation.
    \label{ex:ctxtrace}
\end{example}


\begin{proof}[Correctness of Rule \texttt{[ETP]}]
According to Definition~\ref{def:sinfo} and Lemma~\ref{lemma:fep}, \(\tMap\) records only the statements related to the erroneously accessed memory address $\mathit{addr}_e$. 
Therefore, the constraints in the rule guarantee that $s_m$ is the error memory allocation ($s_m \in \mathtt{KS}(\tMap) \wedge \mathtt{op}(s_m)=\text{alloc}$) that is the closest to $s_n$ on $\pi$ $(\forall k, m < k \leq n: s_k \notin \mathtt{KS}(\tMap) \vee \mathtt{op}(s_k) \neq \text{alloc})$.
\end{proof}




\begin{figure}[t]
    \centering
    \includegraphics[width=\linewidth]{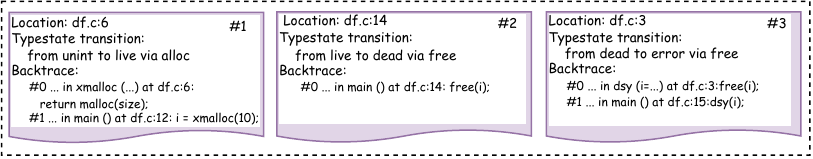}
    \vspace{-8mm}
    \caption{An example of context trace by revisiting Example~\ref{ex:1}.}
    \vspace{-5mm}
    \label{fig:approach_df-trace}
    \end{figure}

\begin{definition}[Context Trace $\widetilde{\mathit{Ctx}}_e$]
\label{def:ctxtrace}
The context trace of an error-propagation path $\pi_e$ is formally defined as $\widetilde{\mathit{Ctx}}_e = ( \mathit{Ctx}_{s_i} \mid s_i \in \pi_e \land \mathtt{Sel}(s_i))$, where each $\mathit{Ctx}_{s_i}$ is the program context at $s_i$, and the sequence follows the order of statements in $\pi_e$. 
Without loss of generality, the selection function $\mathtt{Sel}$ is a predicate over the statements in $\pi_e$.
In our definition, $\mathtt{Sel}(s_i)$ returns true if $s_i$ introduces typestate changes of the memory object operated at the error-triggering point $e$.
\end{definition}

\begin{figure}[ht]
\small
\centering
\ruledef{CXT}{
  s_i\in\pi_e\quad s_i\in\mathtt{KS}(\mathit{tMap})
}{
 \widetilde{\mathit{Ctx}}_e\gets \widetilde{\mathit{Ctx}}_e\circ\tMap[s_i]
}
\vspace{-3mm}
\caption{The inference rule for typestate-guided context trace construction. $\tMap[s_i]$ retrieves the program context related to $s_i$ in $\tMap$.}
\vspace{-3mm}
\label{fig:cxt-rules}
\end{figure}

\runinhead*{Typestate-Guided Context Trace Construction.}
Figure~\ref{fig:cxt-rules} shows the inference rule for constructing the context trace, denoted as $\widetilde{\mathit{Ctx}}_e$, which captures the program contexts associated with typestate-changing statements along an error-propagation path $\pi_e$. 
The rule iterates through each statement $s_i$ in $\pi_e$ from index $m$ to $n$. For each statement $s_i$, if it is a typestate-changing operation, then the corresponding context is appended to $\widetilde{\mathit{Ctx}}_e$. 


\begin{example}
    Figure~\ref{fig:approach_df-trace} presents the context trace of Example~\ref{ex:ctxtrace}. 
    By iterating through the error-propagation path, we append the relevant context information to the trace whenever a typestate-changing operation is encountered in $\tMap$.
    The resulting context trace, depicted in Figure~\ref{fig:approach_df-trace}, consists of three program contexts, clearly showing the lifecycle of the memory, starting from its allocation, through its release, and finally to its erroneous access.
\end{example}

\begin{proof}[Correctness of Rule \texttt{[CXT]}]
The rule ensures the order of context trace follows $\pi_e$ ($s_i\in\pi_e$), and also matches the semantics of \texttt{Sel} in Definition~\ref{def:ctxtrace} ($s_i\in\mathtt{KS}(\tMap)$) according to the definition of $\tMap$ in Definition~\ref{def:sinfo} and Lemma~\ref{lemma:fep}.
\end{proof}

\subsection{Prompting LLM for Error Repair\label{sec:llmapr}}
This section discusses how we utilize the information collected from the typestate-guided analysis to prompt LLM for efficient memory error repair. The process begins by employing the role-play technique~\cite{shanahan2023role} to define and motivate our task, followed by the use of structured prompting~\cite{hao2022structured} to direct the LLM in crafting a patch and explanation(s).

\runinhead*{Role Play.}
Role play is an effective strategy known to enhance the precision of LLM outputs~\cite{shanahan2023role}. In our case, we configure the LLM to function as an APR tool, where it interprets error information from the typestate-guided analysis and generates an appropriate patch with intuitive explanations. 

\runinhead*{Structured Prompting.}
We structure prompts to guide the LLM in generating patches by providing the necessary information to know the error location, understand the error context and path, and then create a resolution. 
The error reports, along with the context trace and error-propagation path guided by the typestate analysis, are formatted as prompts and fed into the LLM. 

The error report guides the LLM in pinpointing the exact location of the memory error in the codebase (\S\ref{sec:errconfirm}). The initiation of the error report typically begins with a statement like, ``Here is the location of the use-after-free error in the provided code snippet''.
The second component is the context trace $\widetilde{\mathit{Ctx}}_e$ (Definition~\ref{def:ctxtrace}) that captures the context sequence of memory state transitions leading to the error, enabling the LLM to comprehend the execution logic of the error.
The final component presents the error-propagation path $\pi_e$ (Definition~\ref{def:error_path}) to the LLM. This path, a critical subsequence of the full execution path $\pi$, extends from the nearest memory allocation statement to the error-triggering point on $\pi$, showing the program dependencies along an intact erroneous memory management. 
Upon presenting all the necessary information for understanding the error semantics and triggering logic, the LLM is then tasked with generating a patch to repair the error.

\section{Evaluation\label{sec:evaluation}}
This section evaluates \thistool's performance in repairing memory errors in real-world projects by comparing it with two state-of-the-art memory error APR tools: \SAVER~\cite{SAVER} and \ProveNFix~\cite{ProveNFix}. \thistool successfully repairs \compares and \comparep more memory errors than \SAVER and \ProveNFix, respectively, while introducing no new errors. We also compare \thistool with LLM-based baselines and conduct an ablation analysis to understand the contribution of each component.

\subsection{Datasets and Implementation}

\begin{table}[t]
\small
    \caption{The statistics and description of the real-world projects used in the evaluation. LoC stands for lines of code. \#File and \#Error represent the number of files and memory errors, respectively.}
    \vspace{-3mm}
    \begin{adjustbox}{width=1\textwidth, center}
    \begin{tabular}{|l|l|r|r|r|l|} \hline
        \textbf{No.} & \textbf{Project (version)}& \textbf{LoC}& \textbf{\#File} &\textbf{\#Error} &\textbf{Description} \\  \hline
        1 & \texttt{ls\_extended} (9d899c8)~\cite{ls}    &   1,352     &   25     &  3 & ls with coloring and icons                    \\  \hline
        2 &\texttt{xHTTP} (72f82d)~\cite{xHTTP}          & 1,493      &   6                & 1   & HTTP server library           \\  \hline
        3 &\texttt{tree} (v1.8)~\cite{tree}              &   3,435   &   13    & 3 & utility to display a tree view of folders                     \\  \hline
        4 &\texttt{chibicc} (90df7f)~\cite{chibicc}      &   9,688   &   71      &  5  & C compiler                   \\  \hline
        5 &\texttt{stb} (v2.8)~\cite{stb}                &   12,076  &   2       &   3 & single-file public domain libraries for C/C++                    \\ \hline
        6 &\texttt{scrot} (b5e5f0d)~\cite{scrot}         &   13,130     &   36     & 1 & command line screen capture utility                    \\  \hline
        7 &\texttt{mjs} (b1b6eac)~\cite{mjs}             &   32,116  &   202    &   1   & embedded JavaScript engine                   \\ \hline
        8 &\texttt{SmallerC} (b120a9c)~\cite{SmallerC}   &   58,535  &   510    &   14 & C compiler                     \\ \hline
        9 &\texttt{MyHTML} (90a853e)~\cite{myhtml}        &   63,617    &   168      & 2  & Fast C/C++ HTML 5 Parser                   \\  \hline
        
        10 &\texttt{quickjs} (d378a9f)~\cite{quickjs}        &   86,281  &   53     &   2 & embedded JavaScript engine                      \\  \hline
 
        11 &\texttt{recutiles} (v1.8)~\cite{recutiles}    &   92,000    &   757    &   5 & tools and libraries to access recfiles                     \\  \hline    
        12 &\texttt{wasm3} (139076a)~\cite{wasm3}        &   111,616    &   696      & 4  & WebAssembly interpreter                   \\  \hline
        13 &\texttt{Yasm} (ffbd22c)~\cite{Yasm}          &   201,975   &   945    &   4  & assembler                    \\  \hline
        14 &\texttt{radare2} (8644a29)~\cite{radare2}    &   879,785  &   2,953   &   1 & reverse engineering framework                     \\  \hline
        \multicolumn{2}{|l|}{\textit{\textbf{Total}}}  & 1,567,099 &   6,437  &   \totalmemerr &    \\  \hline
    \end{tabular}
    \end{adjustbox}
    \label{tab:projects}
    \vspace{-5mm}
    \end{table}

\runinhead*{Datasets.} 

We first compare \thistool with \SAVER and \ProveNFix using the same dataset as \SAVER~\cite{SAVER}. We meticulously reverse-engineer the vulnerability triggering conditions from \SAVER's vulnerability set and construct proof of concept inputs to reproduce identical errors. From the original collection, we exclude multi-threaded programs because our debugger-based approach cannot effectively trace typestate transitions across multiple execution contexts, resulting in eight projects for comparative evaluation.
However, we identify limitations in \SAVER's dataset, including its homogeneity and insufficient representation of real-world vulnerabilities. For instance, the same version of recutiles~\cite{recutiles} contains CVE-2019-6455~\cite{CVE-2019-6455}, which is absent from \SAVER's dataset. Furthermore, fixes in \SAVER's dataset rely solely on manual verification without the crucial validation provided by developer acceptance in real-world scenarios.

To address these limitations, we further construct a comprehensive benchmark comprising \pronum real-world, open-source C projects, encompassing approximately 1.57 million lines of code across diverse application domains including system utilities, network protocols, and media processing libraries. Table~\ref{tab:projects} presents detailed statistics and descriptions for these projects. All memory errors in our benchmark have been \emph{confirmed by respective project developers}, with 9 vulnerabilities assigned CVE identifiers and others documented in official commit histories. To establish reliable ground truth for validating correct fixes, we collect patches approved and implemented by project maintainers. Additionally, we curate error-inducing test cases (proof of concept inputs) to consistently reproduce each error during evaluation.

\runinhead*{Implementation.}
Our experiments are conducted on an Ubuntu 22.04 server with a 24-cores 5.60 GHz Intel CPU and 64 GB of memory. We employ Valgrind (version 3.18.1)~\cite{Nethercote2007Valgrind} and ASan~\cite{ASan} for error replay (\S\ref{sec:errconfirm}). We use the GDB Python API (with GDB version 12.1) and the GNU C Library (version 2.35) to perform typestate-guided context retrieval (\S\ref{sec:typestatetracing}). 
To reduce the time overhead for extracting error-propagation paths and context traces, we set the initial breakpoint at memory allocations instead of the program's entry point. We use Claude 3.5 Sonnet~\cite{Claude, Memory_Safety_Blog} as our LLM (\S\ref{sec:llmapr}). For each prompt, we apply the chain-of-thought~\cite{CoT} method to encourage the model to think step-by-step, thereby generating coherent and logically consistent responses.
Given the inherent randomness of LLMs, it is necessary to ensure the reliability of our results. Therefore, we conduct each experiment five times for every project. We only consider a result to be valid if it exhibits consistency in at least four out of the five runs.


\subsection{Baselines\label{sec:baselines}}

We compare \thistool with six baselines spanning traditional state-of-the-art memory APR tools and LLM-based approaches:

\runinhead*{\SAVER and \ProveNFix.}
To the best of our knowledge, \SAVER~\cite{SAVER} and \ProveNFix~\cite{ProveNFix} represent the current state-of-the-art in memory error APR, capable of addressing memory leaks, use-after-frees, and double-frees. We employ their open-source implementations for comparative analysis. These tools rely on detection results from the static analysis tool \Infer~\cite{infer}. To ensure fair comparison, we augment these tools with \emph{precise error locations}, identical to those provided to \thistool, to direct targeted repairs. Futhermore, to mitigate potential limitations in static analysis, we configure \SAVER and \ProveNFix with \emph{the most advanced parameters} for \Infer as described in their respective publications. This configuration encompasses whole-program analysis of the linked program, flow-sensitive analysis distinguishing control flows, and comprehensive header file parsing~\cite{SAVER, ProveNFix}. 

\runinhead*{SWE-agent 1.0.} 
Existing LLM-based APR tools primarily target Java and Python bugs and predominantly address single-hunk program repairs~\cite{xin2024multihunk}. 
We evaluate \thistool against SWE-agent 1.0~\cite{yang2024sweagent} (hereafter referred to as SWE-agent), an open-source  state-of-the-art LLM-based approach that employs Chain-of-Thought reasoning and sophisticated prompt engineering to autonomously invoke tools for error repair.
To ensure methodological consistency and experimental validity, we provide SWE-agent with identical reproducible errors, comprising comprehensive reproduction workflows, toolchains, proof-of-concept inputs, compiled error versions, and precise trigger and compilation commands. 

\runinhead*{\thistool-F and \thistool-M.} 
To evaluate the capabilities of the base LLM used in \thistool without any contextual retrieval from error trace, we implement two comparative baselines. The first baseline, \thistool-F, incorporates the error report alongside program files directly implicated in the error-triggering point. The second baseline, \thistool-M, extends this approach by including all functions identified in the error backtrace in addition to the error report. Both baselines employ the same structured role-play and prompting techniques delineated in \S\ref{sec:llmapr} that are utilized in \thistool, thus isolating the effect of context retrieval.

\runinhead*{\thistool-NT.} 
To understand the relative contributions of typestate-guided context retrieval to the overall performance of \thistool, we conduct comprehensive ablation studies using \thistool-NT (\thistool without context trace). \thistool-NT omits the context trace in the prompts while maintaining all other components unchanged (such as bug report and error-propagation path), allowing us to measure the specific impact of typestate-guided context trace information.

    
    
    

\subsection{Evaluation Metrics}
We collect the number of patches generated by each APR tool, denoted as \(\#\Delta\). The number of correct patches is represented as \(\#\Delta_{\checkmark}\). To be classified as correct, a patch must: 1) successfully fix the memory error; 2) preserve the expected outcomes for the project's test suite; and 3) be manually validated to align with the ground truth, adhering to the standards outlined in~\cite{liu2020eval}. We use \(\#\Delta_{\text{\ding{55}}}\) to denote the count of patches introducing new errors, as confirmed through fuzzing tests~\cite{Fioraldi2020afl++}. Conversely, \(\#\Delta_{\text{O}}\) denotes the quantity of patches that, while failing to correct the error, do not introduce new ones. Finally, \(\#E_{\checkmark}\) represents the number of fixed errors. 
In terms of performance evaluation, a higher value for both \(\#\Delta_{\checkmark}\) and \(\#E_{\checkmark}\) suggests superior performance, as it implies a greater number of errors have been correctly repaired. Conversely, \(\#\Delta_{\text{\ding{55}}}\) should ideally be minimized to avoid the introduction of new errors. Although \(\#\Delta_{\text{O}}\) does not pose a severe risk of introducing new memory errors, a smaller value is still preferable.
Note that we count memory errors by logical "blocks" rather than by individual pointers. For memory leaks, each distinct allocation of a data structure that fails to be deallocated is counted as a separate error, as each requires an independent deallocation operation. This block-based accounting methodology accurately reflects the granularity at which memory management operations must be applied in practice. 
Consequently, a single patch that properly deallocates a complex data structure may fix multiple memory error blocks simultaneously. Therefore, it is possible that the number of correct patches \(\#\Delta_{\checkmark}\) is smaller than the number of fixed errors \(\#E_{\checkmark}\).


\subsection{Research Questions}
We address the following research questions in our evaluation:
\begin{enumerate}
    \item[RQ1] \textbf{Comparison with traditional APR tools:} How effectively does \thistool repair memory errors compared to state-of-the-art tools such as \SAVER and \ProveNFix, in terms of both repair accuracy and error introduction?
    \item[RQ2] \textbf{Comparison with LLM-based approach:} To what extent does \thistool improve repair performance compared to an open-source state-of-the-art LLM-based approach SWE-agent~\cite{yang2024sweagent}?
    \item[RQ3] \textbf{Ablation analysis:} What is the relative contribution of typestate-guided context retrieval to \thistool's overall repair effectiveness?
\end{enumerate}

\subsection{Comparison with Traditional APR Tools (RQ1)}

\subsubsection{Comparison Results.} 
Table~\ref{tab:saver_dataset} presents the detailed results of our comparative evaluation with state-of-the-art memory error APR tools \SAVER and \ProveNFix using \SAVER's dataset. Our analysis demonstrates that \thistool significantly outperforms both baseline tools, successfully addressing 122 out of 153 errors. The superior repair capabilities of \thistool are consistent across all projects in the comparative evaluation.
Table~\ref{tab:rq1} provides a comprehensive comparison of \thistool against \SAVER and \ProveNFix on our benchmark of \pronum real-world projects. The empirical evidence clearly establishes \thistool's superior performance in error repair. Our approach successfully repairs \fixmemerr out of \totalmemerr identified memory errors, representing \compares more repairs than \SAVER and \comparep more than \ProveNFix. Notably, \thistool achieves these substantial improvements without introducing any new errors, unlike the baseline tools. The project-level analysis further confirms \thistool's consistency, as it repairs at least as many errors as the baseline tools across all projects while maintaining nearly zero error introduction rates.

\begin{table}[t]
    \centering
    \small
    \caption{Comparing \thistool with \SAVER and \ProveNFix using \SAVER's dataset~\cite{SAVER}.}
    \vspace{-3mm}
    \begin{tabular}{l|c|c|c|cc}
    \toprule
    \textbf{Project }& \textit{\#E} & \SAVER~\cite{SAVER} & \textit{\#E} & \ProveNFix~\cite{ProveNFix} & \textbf{\thistool} \\
    \midrule
        \texttt{rappel} (ad8efd7) & 1 & 1 & 1 & 1 & 1 \\ 
        \texttt{lxc} (72cc48f) & 3 & 3 & 23 & 22 & 22 \\ 
        \texttt{WavPack} (22977b2) & 1 & 0 & 12 & 12 & 12 \\ 
        \texttt{flex} (d3de49f) & 3 & 0 & 4 & 4 & 4 \\ 
        \texttt{p11-kit} (ead7ara)  & 33 & 24 & 28 & 27 & 27 \\ 
        \texttt{recutils} (v1.8) & 10 & 8 & 42 & 36 & 37 \\ 
        \texttt{snort} (v2.9.13) & 16 & 10 & 42 & 13 & 18 \\ 
        \texttt{grub} & 0 & 0 & 1 & 1 & 1 \\ 
        \hline
    \textbf{\textit{Total (Fixing Rate)}} & 67 & 46 (68\%) & 153 & 116 (75.8\%) & \textbf{122 (79.7\%)} \\
    \bottomrule
    \end{tabular}
    \label{tab:saver_dataset}
\end{table}

\begin{table}[t]
    \small
    \centering
    \caption{Comparing \thistool with \SAVER and \ProveNFix using the \pronum real-world projects in Table~\ref{tab:projects}. 
    For simplicity, we omit the project versions.
    $\#E$ denotes the number of memory errors.
    }
    \vspace{-3mm}
    \begin{adjustbox}{width=1\textwidth, center}
\def\arraystretch{1.0}
\setlength{\tabcolsep}{1.1ex}
    \begin{tabular}{|l|r|r|r|r|r|r|r|r|r|r|r|r|r|r|r|r|}
    \hline
    \multicolumn{2}{|c|}{} & \multicolumn{5}{c|}
    {\SAVER~\cite{SAVER}} & \multicolumn{5}{c|}{ProveNFix~\cite{ProveNFix}} & \multicolumn{5}{c|}{\textbf{\thistool}} \\ \hline
        \textbf{Project} & $\#E$ 
    & $\#\Delta$ & $\#\Delta_{\checkmark}$ & $\#\Delta_{\text{O}}$ & $\#\Delta_{\text{\ding{55}}}$ & $\#E_{\checkmark}$ 
    & $\#\Delta$ & $\#\Delta_{\checkmark}$ & $\#\Delta_{\text{O}}$ & $\#\Delta_{\text{\ding{55}}}$ & $\#E_{\checkmark}$ 
    & $\#\Delta$ & $\#\Delta_{\checkmark}$ & $\#\Delta_{\text{O}}$ & $\#\Delta_{\text{\ding{55}}}$ & $\#E_{\checkmark}$ \\ \hline
        \texttt{ls\_extended} & 3 & 1 & 1 & 0 & 0 & 1 & 1 & 1 & 0 & 0 & 3 & 1 & 1 & 0 & 0 & \textbf{3}  \\ \hline
        \texttt{xHTTP} & 1 & 0 & 0 & 0 & 0 & 0 & 1 & 0 & 1 & 0 & 0 & 1 & 1 & 0 & 0 & \textbf{1}  \\ \hline
        \texttt{tree} & 2 & 0 & 0 & 0 & 0 & 0 & 0 & 0 & 0 & 0 & 0 & 1 & 1 & 0 & 0 & \textbf{2}  \\ \hline
        \texttt{chibicc} & 5 & 0 & 0 & 0 & 0 & 0 & 2 & 1 & 0 & 1 & 3 & 3 & 2 & 1 & 0 & \textbf{5}  \\ \hline
        \texttt{stb} & 3 & 0 & 0 & 0 & 0 & 0 & 0 & 0 & 0 & 0 & 0 & 3 & 3 & 0 & 0 & \textbf{3}  \\ \hline
        \texttt{scrot} & 1 & 0 & 0 & 0 & 0 & 0 & 1 & 0 & 1 & 0 & 0 & 1 & 1 & 0 & 0 & \textbf{1}  \\ \hline
        \texttt{mjs} & 1 & 1 & 1 & 0 & 0 & 1 & 1 & 1 & 0 & 0 & 1 & 1 & 1 & 0 & 0 & \textbf{1}  \\ \hline
        \texttt{smallC} & 14  & 1 & 0 & 0 & 1 & 0 & 2 & 1 & 0 & 1 & 1 & 7 & 4 & 2 & 1 & \textbf{12}  \\ \hline
        \texttt{MyHTML} & 2 & 0 & 0 & 0 & 0 & 0 & 1 & 1 & 0 & 0 & 2 & 1 & 1 & 0 & 0 & \textbf{2}  \\ \hline
        \texttt{quickjs} & 2 & 0 & 0 & 0 & 0 & 0 & 0 & 0 & 0 & 0 & 0 & 2 & 2 & 0 & 0 & \textbf{2} \\ \hline
        \texttt{recutiles} & 6 & 0 & 0 & 0 & 0 & 0 & 0 & 0 & 0 & 0 & 0 & 2 & 2 & 0 & 0 & \textbf{4}  \\ \hline
        \texttt{wasm3} & 4 & 0 & 0 & 0 & 0 & 0 & 0 & 0 & 0 & 0 & 0 & 1 & 1 & 0 & 0 & \textbf{2}  \\ \hline
        \texttt{yasm} & 4 & 0 & 0 & 0 & 0 & 0 & 0 & 0 & 0 & 0 & 0 & 2 & 1 & 1 & 0 & \textbf{2}  \\ \hline
        \texttt{radare2} & 1 & 0 & 0 & 0 & 0 & 0 & 1 & 1 & 0 & 0 & 1 & 1 & 1 & 0 & 0 & \textbf{1}  \\ \hline
        \textbf{\textit{Total}} & \totalmemerr & 3 & 2 & 0 & 1 & 2 & 10 & 6 & 2 & 2 & 11 & 27 & 22 & 4 & 1 & \textbf{\fixmemerr}  \\ \hline
    \end{tabular}
    \label{tab:rq1}
    \end{adjustbox}
    \vspace{-3mm}
\end{table}

\subsubsection{Case Study.} 
\label{case-study}
We showcase the superior performance of our tool compared to \SAVER and \ProveNFix through four typical code scenarios depicted in Figure~\ref{fig:case-study}.

\runinhead*{Complex Data Structure.} Figure~\ref{fig:case-study}(a) demonstrates \thistool's in-context repair capability to handle complex data structure. This example presents a memory leak issue where primitive deallocation functions (\code{free(db)}) are inappropriately used for complex data structures. \SAVER and \ProveNFix, despite their sophisticated constraint-based and specification-driven methodologies, fail to generate patches because their analysis erroneously concludes that memory deallocation is already addressed—the code already contains deallocation logic (\code{free(db)}), which superficially satisfies basic memory management constraints.
These tools typically excel at adding missing deallocation functions when none exist but struggle with identifying and replacing inadequate deallocation implementations that fail to account for \emph{complex data structures with nested memory allocations}.

In contrast, \thistool successfully identifies that the existing deallocation strategy is insufficient and generates the correct fix by replacing \code{free(db)} with the structure-specific \code{rec\_db\_destroy(db)} function. This solution properly addresses the memory leaks by recursively deallocating all internal components of the \code{db} structure before releasing the main object itself.
The distinctive performance of \thistool stems from its advanced \emph{context-awareness mechanisms} that not only detect missing deallocations but also evaluate the adequacy of existing memory management code against codebase-specific patterns. By analyzing \emph{allocation-deallocation pairings} throughout the codebase and understanding the structural complexity of data types, \thistool can determine when simple deallocation functions are insufficient and recommend appropriate domain-specific alternatives. Furthermore, since the allocation and deallocation of the object \emph{occur in different functions}, \thistool's interprocedural memory error semantics understanding allows it to track the object's lifecycle across function boundaries, enabling the generation of semantically integrated patches that correctly address the memory leak.

\begin{figure}[t]
    \centering
    \includegraphics[width=0.95\linewidth]{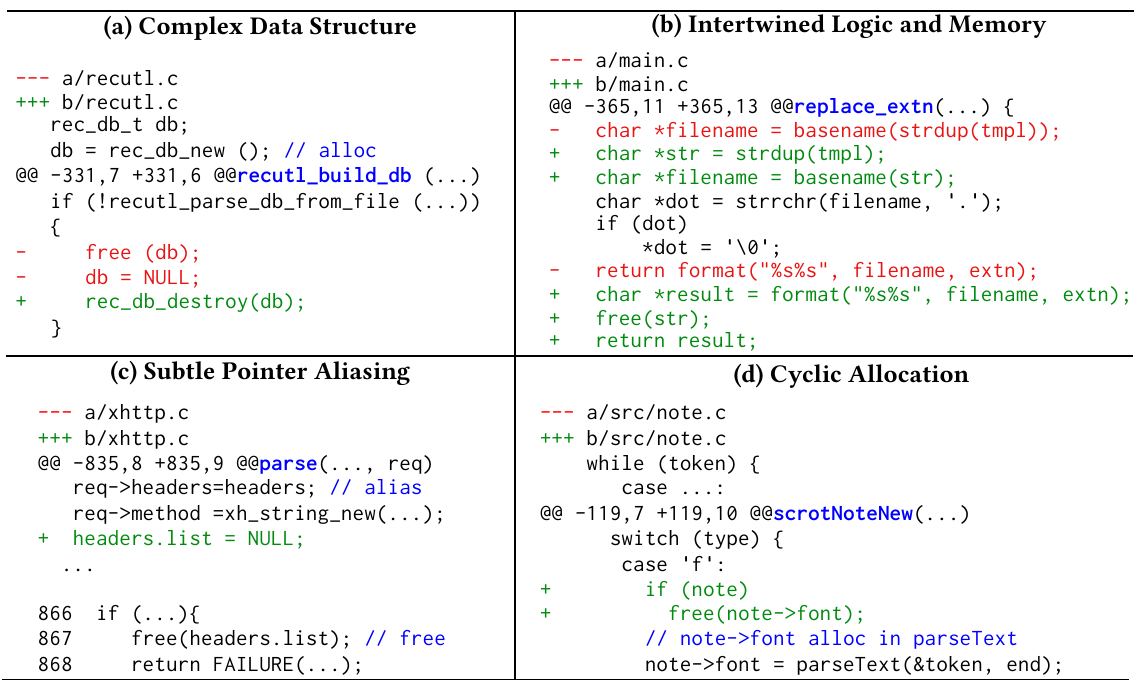}
    \vspace{-5mm}
    \caption{\thistool's patches for: (a) a memory leak in the \texttt{recutiles}~\cite{recutiles} project; (b) a memory leak in the \texttt{chibicc}~\cite{chibicc} project; (c) a double-free vulnerability (CVE-2023-38434~\cite{CVE-2023-38434}) in the \texttt{xHTTP}~\cite{xHTTP} project; and (d) a memory leak error in the \texttt{scrot}~\cite{scrot} project.}  
    \vspace{-5mm}
    \label{fig:case-study}
\end{figure}

\runinhead*{Intertwined Logic and Memory.}
Figure~\ref{fig:case-study}(b) illustrates an example of intertwined logic and memory repair that presents significant challenges for \SAVER and \ProveNFix. Addressing this memory leak necessitates a three-step coordinated modification across disjoint code regions: (1) \emph{introducing a temporary variable} (\code{str}) to track memory allocated by \code{strdup(tmpl)}, (2) \emph{inserting deallocation instructions} (\code{free(str)}) before the function returns, and (3) \emph{restructuring the control/data flow} by placing \code{free(str)} after the use of \code{filename} and storing the return value in an intermediate variable \code{result} rather than immediately returning it. 
This case represents a complex interdependence between program logic and memory management, where an incorrect modification to either aspect could result in functional errors.

In this scenario, \thistool demonstrates superior performance by accurately identifying and resolving both the file handling logic and memory management issues. 
Neither \SAVER nor \ProveNFix could repair the error within the original code structure where memory allocation occurs in a nested function call. To further evaluate their capabilities, we manually restructured the code by splitting \code{char *filename = basename(strdup(tmpl));} at origin $\ell_{365}$ into two distinct statements: \code{char *str = strdup(tmpl);} and \code{char *filename = basename(str);}. With this simplification, both \SAVER and \ProveNFix could generate patches trying to address the memory leak. However, they erroneously positioned the \code{free(str);} statement before the \code{return format("\%s\%s", filename, extn);} at origin $\ell_{369}$, introducing a use-after-free vulnerability since \code{filename} references memory within \code{str}, which would be prematurely deallocated. \thistool, conversely, correctly determines that memory deallocation must occur after all uses of dependent pointers derived from the allocated memory, and generates a semantically sound patch that preserves functionality while eliminating the memory leak.

\runinhead*{Subtle Pointer Aliasing.}
Figure~\ref{fig:case-study}(c) demonstrates \thistool's capability in addressing subtle pointer aliasing scenarios. This vulnerability exemplifies a subtle memory management issue where two pointers, \code{headers} and \code{req->headers} at $\ell_{835}$, reference the same memory region, potentially resulting in double deallocation when memory is freed in both the \code{close\_connection()} and \code{parse()} functions. The challenge here lies not in the syntactic complexity of the code, but in the sophisticated semantic understanding required to trace pointer relationships across function boundaries and execution paths. Such scenarios necessitate comprehension of pointer aliasing to properly identify shared memory references across different control points and contexts.

\SAVER reports "failed to convert labeling operators," indicating its inability to identify the appropriate location for memory deallocation. \ProveNFix generates a patch that introduces an early return before the first \code{free()} operation at $\ell_{867}$, rather than implementing the necessary memory lifecycle management by setting essential pointer to NULL after deallocation.
This approach, while preventing the immediate crash, introduces premature termination of function execution, potentially leading to resource leaks and incomplete functionality. 
\ProveNFix excels at identifying error conditions but struggles with synthesizing correct fixes that require understanding subtle memory sharing semantics. Consequently, \ProveNFix opts for a conservative solution that avoids error manifestation rather than addressing the underlying memory management issue at the exact program point where nullification is required.
In contrast, \thistool comprehends the aliasing relationships between the pointers through the context trace, enabling it to generate a precise and contextually appropriate patch. It correctly inserts a nullification for \code{headers.list} after the pointer copy operation, rather than the conventional practice of inserting NULL after freeing, thereby effectively resolving the double-free vulnerability. 

\runinhead*{Cyclic Allocation.}
Figure~\ref{fig:case-study}(d) illustrates \thistool's effectiveness in addressing cyclic allocation scenarios. 
The error manifests within a while loop where, upon consecutive iterations through the same conditional branch, the program repeatedly allocates memory for the \code{note->font} pointer without first deallocating previously allocated memory resources. This implementation deficiency creates \emph{orphaned memory blocks}, as each subsequent allocation overwrites the reference to previously allocated memory without proper deallocation, thereby causing a persistent memory leak.
Unlike conventional memory management errors where resources simply remain unreleased, this particular error requires understanding that deallocation must precede reallocation within a cyclic execution context.
The non-local nature of the required fix—inserting logic before allocation rather than at the typical post-operation release points—exceeds the pattern-matching capabilities of \SAVER and \ProveNFix.
The patch generated by \thistool establishes a robust resource management guard pattern that first validates the existing resource allocation status \code{if (note)} (preventing potential double-free vulnerabilities) followed by explicit deallocation \code{free(note->font)} prior to subsequent memory allocation operations within the same scope.


\begin{answerBox}
    \textbf{Answer to RQ1}: \thistool outperforms traditional state-of-the-art memory error APR tools \SAVER and \ProveNFix across all projects in both \SAVER's dataset and our comprehensive benchmark. 
    This superior performance stems from \thistool's typestate-guided context retrieval that enables sophisticated reasoning about cross-procedural complex memory error semantics. 
\end{answerBox}

\subsection{Comparison with LLM-based Approach (RQ2)}

\begin{figure}[t]
    \centering
    \includegraphics[width=\linewidth]{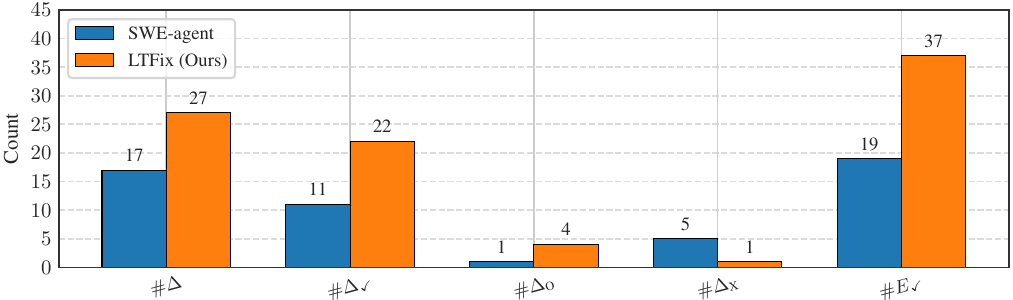}
    \vspace{-8mm}
    \caption{Comparison of fixing effectivenss between \thistool and SWE-agent~\cite{yang2024sweagent} in our dataset. }
    \vspace{-3mm}
    \label{fig:rq2-effectiveness}
\end{figure}

We conduct a comprehensive comparative analysis between \thistool and SWE-agent~\cite{yang2024sweagent}, an open-source state-of-the-art LLM-based program improvement tool. 

\runinhead*{Comparison Results.} As illustrated in Figure~\ref{fig:rq2-effectiveness}, \thistool demonstrates substantial performance improvements over SWE-agent across critical evaluation metrics. 
\thistool successfully repairs 37 memory errors, representing a 94.7\% increase compared to the 19 errors addressed by SWE-agent. 
Quantitatively, our approach generated 27 patches versus SWE-agent's 17 patches. The qualitative differential is even more pronounced: \thistool produces 22 correct patches—representing an 81.5\% accuracy rate—compared to SWE-agent's 10 correct patches (58.8\% accuracy). 
Furthermore, \thistool substantially reduces the generation of harmful patches that introduce new errors, with only 1 instance compared to SWE-agent's 5. 
These results empirically demonstrate that augmenting LLMs with typestate-guided context retrieval yields significantly higher repair precision than relying solely on the LLM's intrinsic reasoning capabilities.

SWE-agent is unable to repair the memory leak in Figure~\ref{fig:case-study}(a) because it lacks awareness of the database object's allocation/deallocation typestate transitions context. Without comprehensive visibility into the \code{db} structure's allocation pattern and complete memory lifecycle, SWE-agent cannot reliably determine whether \code{rec\_db\_destroy(db)} constitutes the appropriate replacement, as it must consider potential risks such as double-free vulnerabilities.
Similarly, SWE-agent fails to comprehend the intricate pointer alias relationships in Figure~\ref{fig:case-study}(c) without the execution context that captures memory references and state transitions. Moreover, it demonstrates inadequate performance when addressing the cyclic allocation scenario in Figure~\ref{fig:case-study}(d), primarily due to its inability to perform interprocedural memory lifecycle tracking and to identify the recurring allocation pattern that characterizes this particular vulnerability. These limitations underscore the fundamental advantage of typestate-guided context retrieval in providing the necessary semantic understanding for effective memory error repair, highlighting the qualitative difference between \thistool's targeted approach and the quantitative expansion of LLM context windows alone.

\begin{figure}[t]
    \centering
    \includegraphics[width=\linewidth]{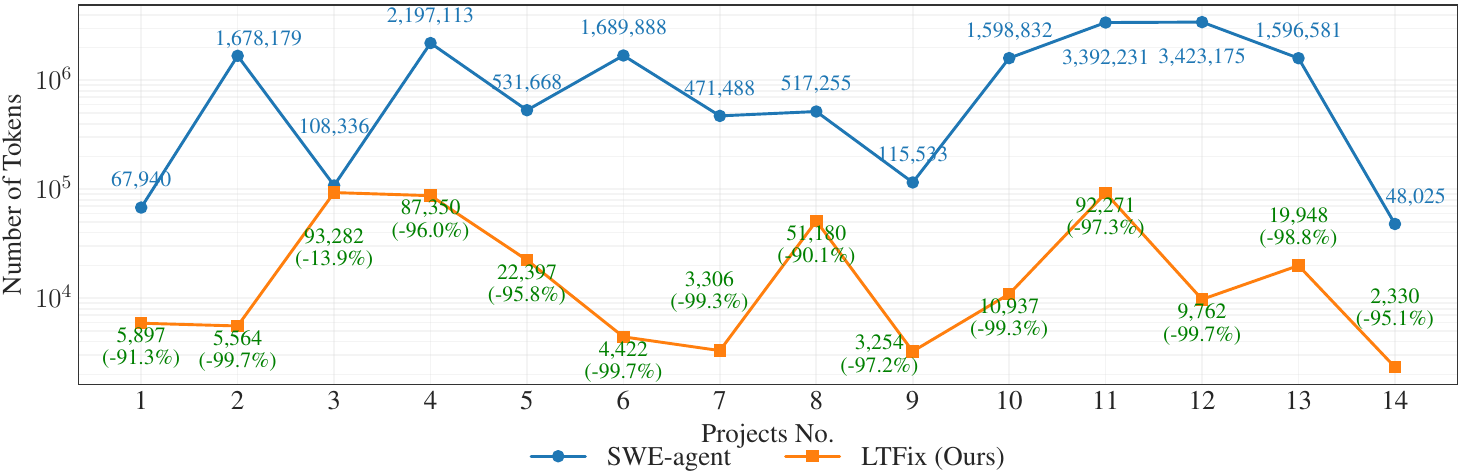}
    \vspace{-8mm}
    \caption{Comparison of the number of tokens consumed by \thistool and SWE-agent~\cite{yang2024sweagent} for memory error repair in our dataset. }
    \label{fig:rq2}
\end{figure}

\runinhead*{Token Consumption Analysis.} As illustrated in Figure~\ref{fig:rq2}, \thistool consumes significantly fewer tokens---approximately 42 times less than SWE-agent's total token usage. This substantial efficiency gap is consistently observed across all evaluated projects. The remarkable reduction in token consumption can be attributed to our typestate-guided context retrieval mechanism, which precisely extracts contextually relevant information necessary for generating accurate patches without incurring excessive computational overhead. In contrast, SWE-agent relies solely on the LLM's intrinsic reasoning capabilities without any specialized guidance for contextual prioritization.
These empirical results demonstrate that \thistool's strength extends beyond merely producing more patches; it generates substantially more accurate and less harmful repairs while addressing a greater number of errors---all with significantly higher token efficiency. This synergistic combination of repair quality and computational efficiency renders \thistool both more effective and more economical for practical deployment of LLM-based APR in real-world memory error repair scenarios.

\begin{answerBox}
\textbf{Answer to RQ2}: \thistool outperforms SWE-agent~\cite{yang2024sweagent} in terms of both repair accuracy and efficiency. Our approach generates substantially more correct patches, fewer harmful patches, and fixes more errors while consuming significantly fewer tokens, demonstrating superior performance in memory error repair.
\end{answerBox}

\subsection{Ablation Analysis (RQ3)}

Figure~\ref{fig:rq3} presents the ablation analysis result comparing \thistool with three variants (see \S\ref{sec:baselines}): \thistool-M (\thistool with the methods containing the error), \thistool-F (\thistool with the file containing the error) and \thistool-NT (\thistool without context trace).
We aim to understand the capability of the base LLM used in \thistool and the benefits brought by the typestate-guided context retrieval.

\runinhead*{Correct Patches.}
As shown in Figure~\ref{fig:rq3}(a), \thistool-F and \thistool-M, which provide file-level and method-level information respectively, demonstrate the lowest efficacy in generating correct patches, producing only 7 and 11 correct patches. 
\thistool-NT generates a higher number of correct patches than both baseline variants, achieving 15 correct patches. However, the complete \thistool system with intact typestate-guided contextual information significantly outperforms all variants with 22 correct patches, demonstrating a 47\% improvement over \thistool-NT and a 214\% improvement over \thistool-F.

\begin{figure}[t]
    \centering
    \includegraphics[width=\linewidth]{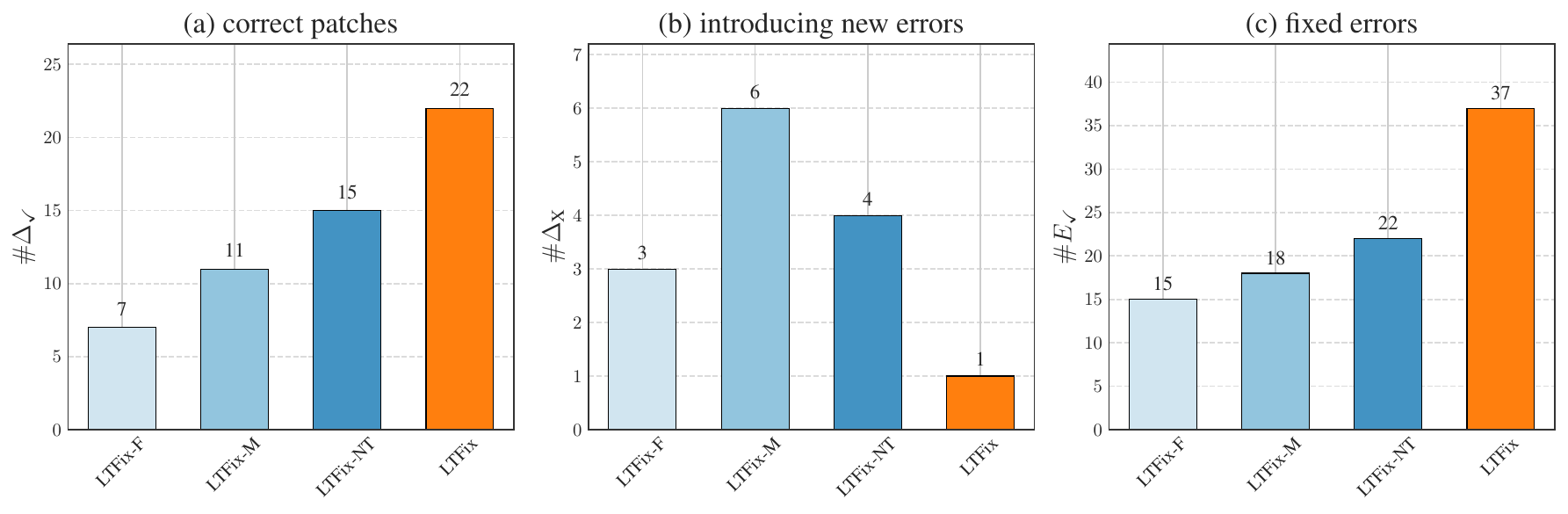}
    \vspace{-7mm}
    \caption{Ablation analysis result.}
    \vspace{-5mm}
    \label{fig:rq3}
\end{figure}

\runinhead*{Introducing New Errors.}
Figure~\ref{fig:rq3}(b) illustrates the number of new errors introduced by each variant. \thistool-F and \thistool-M introduce 3 and 6 new errors respectively, while \thistool-NT introduces 4 new errors.
In contrast, \thistool introduces only 1 new error, representing an 83\% reduction compared to the other variants. This significant decrease demonstrates that the integration of typestate-guided context retrieval is critical for preventing the introduction of new errors during the repair process.

\runinhead*{Fixed Errors.}
Figure~\ref{fig:rq3}(c) presents the total number of errors fixed by each variant. The baseline variants \thistool-F and \thistool-M fix 15 and 18 errors respectively, while the enhanced variant \thistool-NT shows incremental improvement by fixing 22 errors.
The complete \thistool system demonstrates superior performance by fixing 37 errors, which represents a 68\% improvement over \thistool-NT and a 147\% improvement over \thistool-F. This gap confirms that each component contributes significantly to the overall effectiveness of our approach, with typestate-guided context retrieval yielding synergistic benefits beyond what either component achieves independently.


To illustrate the critical importance of typestate-guided context retrieval, we examine the cases in Figure~\ref{fig:case-study} (c) and (d). Neither these variants can successfully repair these complex memory errors, as they lack the necessary execution context. Without typestate-guided context tracing, LLMs cannot identify critical semantic relationships—specifically, that pointer aliasing causes the double-free vulnerability in case (c), and that \code{note->font} undergoes multiple allocations without proper deallocation in case (d).
These examples demonstrate the fundamental necessity of context tracing in memory error repair. In Figure~\ref{fig:case-study} (c), context tracing enables the system to observe subtle changes in multiple pointers' states and track function call propagation across execution boundaries. Similarly, in Figure~\ref{fig:case-study} (d), it facilitates the tracking of \code{note->font}'s allocation state across diverse execution paths. In the absence of such context tracing, the LLM lacks comprehensive visibility into the object's complete lifecycle, rendering it impossible to determine that subsequent memory allocations occur without proper deallocation of previously allocated resources. This limitation fundamentally impedes the LLM's ability to generate semantically correct patches that address the underlying memory management deficiencies.

\runinhead*{Token Consumption Analysis.} 
To quantitatively assess the efficiency of typestate-guided context retrieval, we conduct a comparative analysis between our approach and the use of full context traces. The experimental results demonstrate that \thistool consumes only 411,900 tokens in total, whereas utilizing the complete context trace requires 22,950,414 tokens—a reduction factor exceeding 50$\times$. This substantial efficiency gain empirically validates the effectiveness of our targeted context retrieval strategy. By selectively extracting only the most semantically relevant information, \thistool maintains superior repair quality while dramatically reducing computational overhead, making it both more performant and more economical for practical deployment of LLM-based APR in real-world environments.

\begin{answerBox}
\textbf{Answer to RQ3}: The complete system's integration of all components in \thistool yields superior results across metrics, notably reducing new error introduction by enabling better understanding of error semantics, generating more precise repairs and consuming significantly less tokens.
\end{answerBox}
\section{Threats to Validity\label{sec:threats}}

\runinhead*{Dataset.} A potential threat to validity concerns the possible inclusion of our evaluated open-source projects and patches in the training dataset of the employed LLMs, which could introduce evaluation bias. Ideally, experiments would utilize new, previously unseen memory errors to completely eliminate this possibility. However, this limitation affects all LLM-based baselines equally in our comparative analysis, and our approach consistently demonstrates significant performance improvements over these baselines, suggesting that the core contributions of our typestate-guided context retrieval mechanism extend beyond any potential advantages from data exposure.

\runinhead*{LLM Selection.} Our evaluation primarily utilizes Claude 3.5 Sonnet, although empirical evidence suggests that comparable LLMs (e.g., GPT-4o) demonstrate similar performance on memory error APR tasks. While more sophisticated models might offer improvements, the focal point of our research is the enhancement of memory error APR through typestate-guided context retrieval, rather than a comparative assessment of performance across different LLM architectures. 

\runinhead*{Repair Scope.} Our approach does not aim to detect or repair all possible memory errors within a repository, but rather provides a targeted solution for memory errors that have been reproduced with a specific proof of concept. This may constrain the generalizability to memory errors beyond the provided proof of concept. However, to ensure methodological fairness in our comparative evaluation, all baseline approaches are provided with identical proof-of-concept demonstrations, and our approach consistently outperformed them under these controlled conditions.

\section{Related Work\label{sec:related}}

We review relevant literature across three primary domains: specialized techniques for automated memory error repair, the emerging integration of large language models in program repair frameworks, and the application of large language models for advanced program analysis. Through this examination, we position our approach within the broader research landscape while highlighting the limitations of existing methods when addressing complex memory management challenges.

\runinhead*{Automated Memory Error Repair.} 
Repairing memory errors is a complex task due to the non-local nature of memory management and its temporal properties. Various efforts have been proposed to address this issue~\cite{SAVER, ProveNFix, Footpatch, Memfix, nguyen2021apr, AddressWatcher, LeakFix,mechtaev2016angelix,semfix}. 
SemFix~\cite{semfix} and Angelix~\cite{mechtaev2016angelix} are general-purpose repair techniques that, while broadly applicable, demonstrate lower efficacy compared to specialized approaches tailored for specific error categories such as memory errors~\cite{SAVER} and null dereferences~\cite{vfix}.
AddressWatcher~\cite{AddressWatcher} can only fix memory leaks and cannot be applied to use-after-frees and double-frees. While MemFix~\cite{Memfix} is effective for small-scale programs, it struggles to scale up for larger applications and fails to generate patches that include conditional deallocation for safety checks. 
FootPatch~\cite{Footpatch} requires templated annotations at the bug locations and can inadvertently introduce double-free errors when addressing memory leaks. SAVER~\cite{SAVER} is more scalable for larger applications, but it does not take advantage of the intermediate bug information provided by Infer, which inhibits its effectiveness. \ProveNFix~\cite{ProveNFix} uses temporal property-based specifications, referred to as future conditions, to repair memory errors and other temporal bugs. However, a common limitation among these existing memory error APR tools is their dependence on manually crafted specifications.
In contrast, our tool can infer a correct fix without the need for explicitly defined rules. Moreover, compared to \thistool, existing solutions often struggle to fix inter-procedural multi-hunk bugs and have difficulty leveraging in-context repair beyond using primitive APIs.

\runinhead*{LLMs for Automated Program Repair.} Recent advancements in Automated Program Repair (APR)~\cite{APR_survey} have witnessed the emergence of LLM-based techniques, which can be categorized into two primary approaches: Open-Source-LLM-based \cite{OLLM_CodeBert-finetune21, OLLM_gamma, OLLM_Cure21, OLLM_selfAPR22, OLLM_RewardRepair22, NPR_LLMs23, OLLM_2024_Xia} and Closed-Source-LLM-based \cite{CLLM_CLPR, CLLM_SRepair, CLLM_inferfix, CLLM_CodeRover, CLLM_chatrepair, wei2023LLMAPR, OpenHands, yang2024sweagent, CodeAgent}. Open-Source-LLM-based methods typically necessitate substantial additional data for model fine-tuning. For instance, Mashhadi et al. \cite{OLLM_CodeBert-finetune21} curated over 600,000 samples to enhance Java single-statement bug repair capabilities. However, such extensive datasets are particularly challenging to assemble for memory error repairs due to their specialized nature.
Our approach aligns more closely with Closed-Source-LLM-based methodologies, which leverage the inherent capabilities of pre-trained LLMs as their foundation, subsequently augmenting them with external contextual information \cite{CLLM_chatrepair, CLLM_CLPR} or sophisticated decision-making chains~\cite{CLLM_CodeRover, CLLM_SRepair}. Recent research~\cite{OpenHands, yang2024sweagent, CodeAgent} has further evolved this paradigm by employing agent-based frameworks that enhance the repair process through interactive engagement with isolated computational environments.
While our approach shares the utilization of Closed-Source LLMs with existing methods, it fundamentally differs in both focus and implementation. Prior approaches predominantly target general bug fixing (mostly in Java and Python) but face limitations when addressing memory errors due to the extensive contextual information required. Our novel contribution lies in the development of typestate-guided context retrieval, which effectively compresses error-related contexts to lengths suitable for LLM processing while preserving critical semantic information necessary for accurate memory error repair.

\runinhead*{LLMs for Program Analysis.}
LLMs have demonstrated remarkable capabilities in reasoning about complex program semantics and performing sophisticated program analysis. For instance, Li et al.~\cite{li2024enhancing} effectively combine static analysis with LLMs to detect Use Before Initialization (UBI) bugs within the Linux kernel, exemplifying the potential of LLMs in understanding complex programming semantics. Similarly, Huang et al.~\cite{huang2024revealing} utilize the in-context learning capability of LLMs to elucidate program dependencies, highlighting their potential in clarifying intricate program structures. In another study, Wen et al.~\cite{wen2024enchanting} decompose programs and employ LLMs to synthesize specifications for automated program verification. Cheng et al.~\cite{cheng2024semanticenhancedindirectanalysislarge} propose a semantic-enhanced approach based on LLMs to improve the efficiency of indirect call analysis. Wang et al.~\cite{wang2025llmdfa} present an LLM-powered compilation-free and customizable dataflow analysis. While the focus of these studies differs from ours, the combination of program analysis for understanding program semantics and the in-context learning ability of LLMs has inspired us to leverage LLMs in the development of \thistool.

\section{Conclusion\label{sec:conclusion}}
The paper presents \thistool, a novel approach to automated memory error repair in C programs using Large Language Models (LLMs). By leveraging LLMs' extensive knowledge of code and natural language, guided by a finite typestate automaton and structured prompting, \thistool addresses the limitations of traditional automated memory error repair methods and previous deep learning approaches. Our tool demonstrates significant success in repairing real-world memory errors across large-scale open-source projects, outperforming existing state-of-the-art tools and even fixing three zero-day memory errors. This approach shows promise in advancing the field of automated program repair, particularly for complex memory-related errors in C programming.

\section*{Data Availability Statement}
This paper is currently under review. All implementation details and associated data are available to reviewers and will be made publicly available upon acceptance.


\bibliography{refs}



\end{document}